\DeclarePairedDelimiter{\ceil}{\lceil}{\rceil}
\theoremstyle{definition}
\newtheorem{definition}{Definition}[section]
\newtheorem{theorem}{Theorem}[section]
\newtheorem{lemma}{Lemma}[section]
\newcommand{\abs}[1]{\ensuremath{\left| #1 \right|}}
\newcommand{\vecalpha}{{\vec{\alpha}}}
\newcommand{\vecbeta}{{\vec{\beta}}}
\newcommand{\myvec}[1]{\vec{#1}}
\newcommand{\pr}[0]{^\prime}
\newcommand{\paren}[1]{\left(#1\right)}
\newcommand{\brac}[1]{\left[#1\right]}
\newcommand{\curly}[1]{\left\{#1\right\}}
\newcommand{\mc}[1]{\mathcal{#1}}
\newcommand{\suml}[3]{\sum\limits_{#1}^{#2}#3}
\newcommand{\whp}[0]{\mathbin{\Pi_\alpha}}
\newcommand{\vhp}[0]{\mathbin{\Pi_{\vec{\alpha}}}}
\newcommand{\uhp}[0]{\mathbin{\Pi}}
\newcommand{\tuhp}[0]{\mathbin{\Gamma}}
\newcommand{\twhp}[0]{\mathbin{\Gamma_\alpha}}
\newcommand{\tvhp}[0]{\mathbin{\Gamma_{\vec{\alpha}}}}
\newcommand{\id}[0]{\mathbbm{1}}
\newcommand{\mbb}[1]{\mathbbm{#1}}
\newcommand{\isep}{\mathrel{{.}\,{.}}\nobreak}
\crefname{section}{Sec.}{Secs.}
\Crefname{section}{Section}{Sections}
\begin{document}
\title{Unitary Entanglement Construction in Hierarchical Networks}
\date{\today}
\author{Aniruddha Bapat}
\affiliation{Joint Center for Quantum Information and Computer Science, NIST/University of Maryland, College Park, MD 20742, USA}
\affiliation{Joint Quantum Institute, NIST/University of Maryland, College Park, MD 20742, USA}
\author{Zachary Eldredge}
\affiliation{Joint Center for Quantum Information and Computer Science, NIST/University of Maryland, College Park, MD 20742, USA}
\affiliation{Joint Quantum Institute, NIST/University of Maryland, College Park, MD 20742, USA}
\author{James R. Garrison}
\affiliation{Joint Center for Quantum Information and Computer Science, NIST/University of Maryland, College Park, MD 20742, USA}
\affiliation{Joint Quantum Institute, NIST/University of Maryland, College Park, MD 20742, USA}
\author{Abhinav Deshpande}
\affiliation{Joint Center for Quantum Information and Computer Science, NIST/University of Maryland, College Park, MD 20742, USA}
\affiliation{Joint Quantum Institute, NIST/University of Maryland, College Park, MD 20742, USA}
\author{Frederic T. Chong}
\affiliation{Department of Computer Science, University of Chicago, Chicago, USA}
\author{Alexey V. Gorshkov}
\affiliation{Joint Center for Quantum Information and Computer Science, NIST/University of Maryland, College Park, MD 20742, USA}
\affiliation{Joint Quantum Institute, NIST/University of Maryland, College Park, MD 20742, USA}

\begin{abstract}
	The construction of large-scale quantum computers will require modular architectures that allow physical resources to be localized in easy-to-manage packages. In this work, we examine the impact of different graph structures on the preparation of entangled states. We begin by explaining a formal framework, the hierarchical product, in which modular graphs can be easily constructed. This framework naturally leads us to suggest a class of graphs, which we dub hierarchies. We argue that such graphs have favorable properties for quantum information processing, such as a small diameter and small total edge weight, and use the concept of Pareto efficiency to identify promising quantum graph architectures. We present numerical and analytical results on the speed at which large entangled states can be created on nearest-neighbor grids and hierarchy graphs. We also present a scheme for performing circuit placement -- the translation from circuit diagrams to machine qubits --  on quantum systems whose connectivity is described by hierarchies.\end{abstract}

\maketitle

\section{Introduction}
As quantum computers grow from the small, few-qubit machines currently deployed to the large machines required to realize useful, fault-tolerant computations, it will become increasingly difficult for every physical qubit to be part of a single contiguous piece of hardware. Just as modern classical computers do not rely on a single unit of processing and memory, instead using various components  such as CPUs, GPUs, and RAM, we expect that a quantum computer will likewise use specialized modules to perform different functions. At a higher level, computers can be organized into clusters, data centers, and cloud services which allow for a distributed approach to computational tasks, another paradigm quantum computers will no doubt emulate. Already, there has been significant interest in how quantum algorithms for elementary operations such as arithmetic perform in distributed-memory situations \cite{VanMeter2005,Meter2008} and how to automate the design of quantum computer architectures \cite{Ahsan2015}. In addition, the construction of a fault-tolerant quantum computer naturally suggests a separation of physical qubits into groups corresponding to logical qubits, which makes modularity an attractive framework for building fault-tolerant computers \cite{Metodi2005}. Modular and scalable computing architectures have been explored for both ion trap \cite{Duan2010,Monroe2013} and superconducting platforms \cite{Devoret2013, Brecht2016, Kurpiers2017}. 

In this paper, we use tools from graph theory to discuss benefits and drawbacks of different potential architectures for a modular quantum computer. A graph-theoretic approach allows us to flexibly examine a wide range of possible arrangements quantitatively and allows for convenient numerical simulation using existing software packages designed for network analysis \cite{NetworkX}. We especially wish to focus on families of graphs that can scale with the desired number of qubits.
In general, we assume that connectivity, i.e., being able to quickly perform operations between nodes, is desirable in an architecture, but that building additional graph edges is in some way costly or difficult, and so will try to minimize the number of needed edges to achieve a highly communicative graph.

We will examine the performance of graphs in generating large entangled states such as the multi-qubit Greenberger-Horne-Zeilinger (GHZ) state (also known as a cat state). 
The GHZ state has perfect quantum correlations between different qubits; it thus can be used to perform high-precision metrology \cite{Bollinger1996,Eldredge2016}. In addition, the creation of a GHZ state can be used as part of a state-transfer protocol, which may be useful as part of large quantum computations \cite{Eldredge2017}.

An additional property of GHZ state preparation and state transfer which makes them a useful starting point is that, in nearest-neighbor connected systems, performing these tasks using unitary processes from an initial product state is limited by the Lieb-Robinson bound \cite{Bravyi2006a, Bentsen2018}. It takes a time proportional to the distance between two points to establish maximal quantum correlation between them. By examining these tasks on a range of different graphs, we hope to understand how the graph structure can affect the limitations on quantum processes caused by locality considerations. Prior work has characterized the difficulty of creating graph states \cite{Cuquet2012}, but preparation of such states is not limited by Lieb-Robinson considerations.

Our work in this paper should be contrasted with work on entanglement percolation \cite{Acin2007, Perseguers2013}. Entanglement percolation describes the process of using low-quality entanglement between adjacent nodes on a graph to create one unit of long-range, high-quality entanglement (e.g., a Bell pair). The use of entanglement percolation to prepare large cluster states on a lattice was considered in Ref.~\cite{Kieling2007}. The nature of entanglement growth in complex networks was considered in Refs.~\cite{Cuquet2009, Cuquet2011}, showing that so-called ``scale-free'' networks are particularly easy to produce large entangled states in. We are interested in the overall capability of different graph structures to perform large computations and in the use of  graph eigenvalue methods to understand the spread of quantum information \cite{YangWang2003}. GHZ state preparation and state transfer are just two possible benchmark tasks, and it is possible that other tasks would result in different evaluations of relative performance between graphs.

Our work should also be considered in the context of classical network theory, where much is known about complicated graph structures \cite{Watts1998, Barabasi1999, Albert2002}. It remains to be seen to what degree classical network theory can be easily exported to the quantum domain. Quantum effects such as the no-cloning theorem may limit our ability to distribute information, or conversely we can take advantage of teleportation to distribute quantum bandwidth in anticipation of it actually being needed. As further examples of how quantum and classical networks differ, it has been shown that entanglement swapping may be used to permit quantum networks to reshape themselves into interesting and useful topologies \cite{Perseguers2010}. It has also been shown that, in general, the optimal strategy for entanglement generation in quantum networks can be difficult to calculate because many aspects of classical control theory do not apply
\cite{DiFranco2012}.

The structure of this paper is as follows. In \cref{sec:hp}, we will introduce a binary operation on graphs known as the hierarchical product, describe how it can be used to produce families of graphs we call hierarchies, and discuss the properties of these hierarchies. In \cref{sec:comparison}, we will compare hierarchies to other families of graphs, examining how certain graph-theoretic quantities scale with the total number of included qubits. In \cref{sec:construction}, we will use analytic and numerical methods to examine how long is required to construct GHZ states spanning our graphs or to transfer states across them. Finally, in \cref{sec:placement}, we will show how the unique structure of hierarchies allows for simple heuristics to map qubits in an algorithm into physical locations in hardware.

\section{Hierarchical Products of Graphs}
\label{sec:hp}
\subsection{Background and Notation}
One of the defining features of modularity in a network is the presence of clusters of nodes that are well-connected. Qualitatively, a modular network can be partitioned into such node clusters, or \emph{modules}, that have a sparse interconnectivity. In quantum networking, it is believed that fully connected architectures will suffer greatly decreasing performance or increasing costs as the number of nodes becomes larger, and this motivates the search for alternative network designs. For instance, Ref.~\cite{Monroe2014} estimates that a single module of trapped-ion qubits will likely contain no more than $10-100$ ions, noting that the speed at which gates are possible becomes slower as the module is expanded. On the network scale, we might imagine a network of nodes over longer distances connected by quantum repeaters \cite{Briegel1998}. In such a network, establishing direct links between every possible pair of $N$ nodes would require $\Theta(N^2)$ sets of quantum repeaters, a prohibitive cost as $N$ becomes large.

The state of the art in quantum technologies, such as ion traps and superconducting qubits, is the ability to control a small number ($\sim 10-100$) of physical qubits using certain fixed sets of one- and two-qubit operations. Instead of increasing the size of these modules, one could instead build a network out of many small modules that are connected at a higher level in a sparse way, perhaps by optical communication links \cite{Monroe2014}.

Our first goal will be to describe modular architectures in the language of graph theory. This will then allow us to quantify and compare their connectivity properties against other network designs, notably the nearest-neighbor grid architecture.

An unweighted graph $G = (V,E)$ is conventionally specified by a set of vertices $V$, and a set of edges between the vertices $E$, where an edge between distinct vertices $i$ and $j$ will be denoted by the pair $(i,j)$. In this paper, we use the terms ``vertex'' and ``node'' synonymously. The \emph{order} of a graph is the total number of vertices in the graph, $|V|$. It will be useful for the purposes of this paper to work with \emph{weighted} graphs, where we specify a weight $w_{ij}\in\mbb{R}$ for each pair of vertices $\paren{i,j}\in V\times V$. Two vertices $i$ and $j$ are said to be \emph{disconnected} if $w_{ij}= 0$, and connected by an edge with weight $w_{ij}\ne 0$ otherwise. Thus, unweighted graphs may be thought of as graphs with unit weight on every edge. 

Finally, the graphs we consider here will be \emph{simple}, meaning: 
\begin{itemize}
\item The edges have no notion of direction. In other words, $w_{ij} = w_{ji}$ for all $i,j\in V$. 
\item There are no self-edges, i.e., $w_{ii}=0$ for all $i\in V$.
\item Any two vertices have at most one edge between them.
\end{itemize}
Henceforth, graphs will be simple and weighted, unless otherwise specified.

The information contained in a graph can be represented as a matrix known as the \emph{adjacency matrix}, whose rows and columns are labeled by the vertices in $V$ and whose entries hold edge weights. Thus, the adjacency matrix is an $n\times n$ matrix where $|V|=n$. The adjacency matrix $A_G$ (or simply $A$ for shorthand) for a graph $G$ is given by
\begin{equation}
A_{ij} = \begin{cases}
0, &\text{if } i=j,\\
w_{ij} ,& \text{if } i\ne j.\\
\end{cases}
\end{equation}
An important measure of local connectivity is given by the \emph{valency} $v_i$ of a node $i$, with $v_i = \suml{j=1}{n}{w_{ij}}$. For unweighted graphs, the valency of any node is simply the number of edges incident at that node, otherwise known as the \emph{degree} of the node. We will also define the graph diameter, $\delta(G)$, as the maximization of the shortest distance between two nodes on the graph over all pairs of nodes. 

Graphs may also be described by the \emph{Laplacian}. The algebraic Laplacian $L$ is given by
\begin{equation}
L_{ij} = \begin{cases}
v_i,&\text{if } i=j,\\
-w_{ij},& \text{if } i\ne j. \\
\end{cases}
\end{equation}
The algebraic Laplacian is closely related to the adjacency matrix, since we may write $L = \Delta - A$, where $\Delta = \text{diag}\paren{v_1,\ldots,v_n}$ is the diagonal matrix of vertex valencies. The eigenvalues of the algebraic Laplacian give us bounds on various graph properties, as discussed further in Sec.~\ref{sec:spectral}.

Finally, we remark that the algebraic Laplacian should not be confused with the normalized Laplacian $\mc{L} = \Delta^{ -\frac{1}{2}}L\Delta^{-\frac{1}{2}}$, which is frequently seen in the network theory literature. The algebraic properties discussed in the next section (such as associativity of the hierarchical product) apply to the adjacency matrix as well as the algebraic Laplacian, but not to the normalized Laplacian.

\subsection{Hierarchical Product}
\label{sec:hpdef}
Here, we will define the hierarchical product and illustrate it with simple examples. For a fuller exposition, see Ref.~\cite{Barriere2009}, where the hierarchical product of graphs was introduced. Note that,  in some contexts, the hierarchical product is also known as the rooted product \cite{Godsil1978}. 

Given a graph $G$, let $\id_G$ denote the identity matrix on $n=|V|$ vertices. We will denote by $D_G$ an $n\times n$ diagonal matrix with 1 as the first entry and zero everywhere else. Note that there is no natural notion of order to graph vertices, so the choice of ``first'' vertex must be specified explicitly. Graphs with such a specified first vertex are called \textit{rooted graphs} \cite{Harary1955}. We write these matrices as
\begin{align}
\id = \begin{pmatrix}1 & & & & \\ & 1 & & & \\ & & 1 & & \\ & & & \ddots & \\ & & & & 1\end{pmatrix},\ \ \  D = \begin{pmatrix}1 & & & & \\ & 0 & & & \\ & & 0 & & \\ & & & \ddots & \\ & & & & 0\end{pmatrix}.
\end{align}
\begin{definition}
\label{def:ghp}
Given graphs $G$ and $H$, the \emph{hierarchical product} $P = G\uhp H$ is the graph on vertices $V_P = V_G\times V_H$ and edges $E_P \subseteq V_P\times V_P$ specified by the adjacency matrix
\begin{equation}
	A_P = A_G \otimes D_H + \id_G \otimes A_H \label{eq:ghp_adj},
\end{equation}
      or, equivalently, by the algebraic Laplacian
\begin{equation}
	L_P = L_G \otimes D_H + \id_G \otimes L_H .
\end{equation}
We will often use the shorthand $A_P = A_G\uhp A_H$ and $L_P = L_G\uhp L_H$. 
\end{definition}

\begin{figure}[tb]
  \centering
  \includegraphics[width = .45\textwidth]{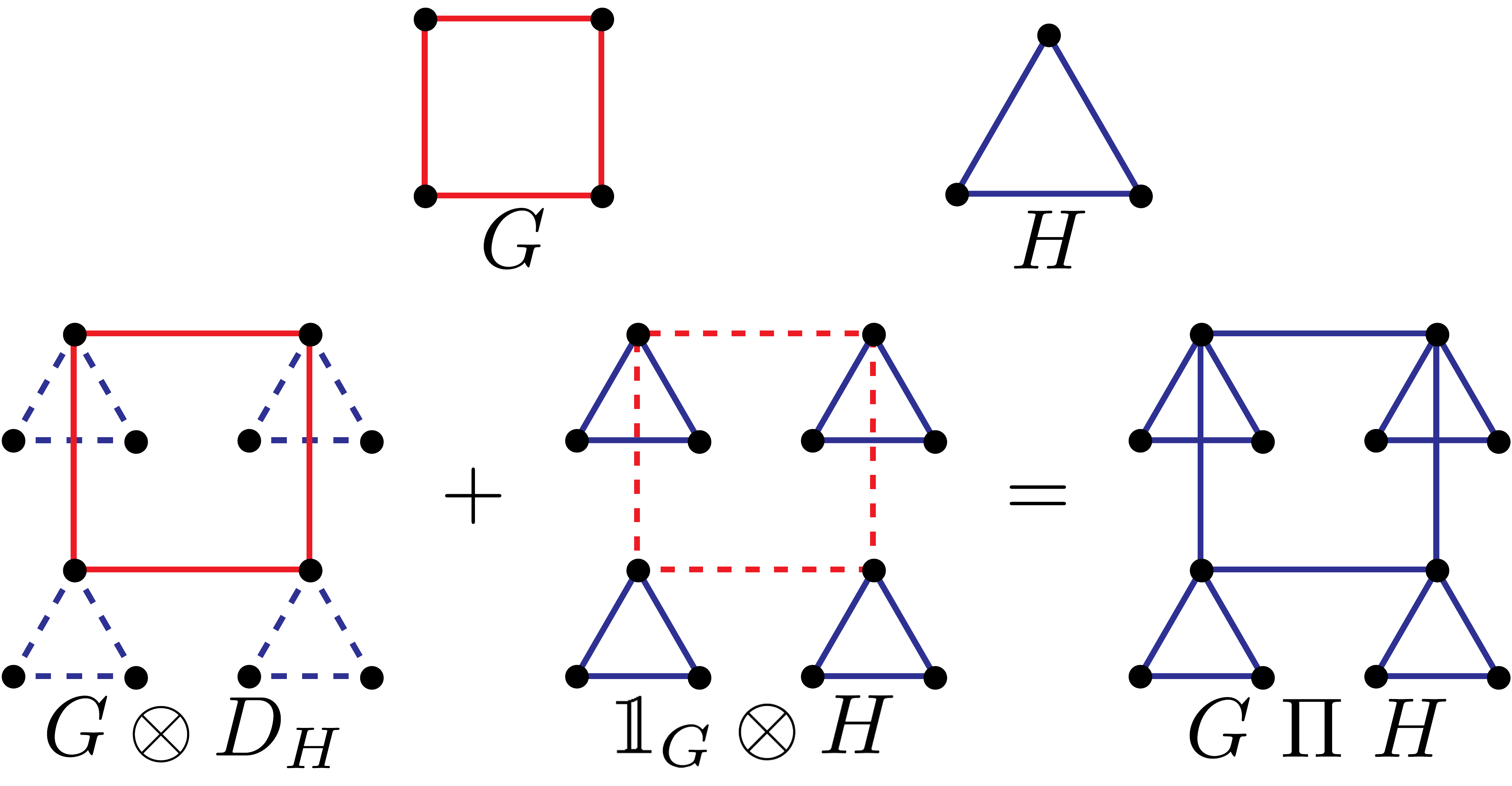}
  \caption{A simple example of the hierarchical product $G\uhp H$ between the cycle graphs $G= C_4$ and $H= C_3$.  The first term in \cref{eq:ghp_adj}, $A_G\otimes D_H$,  creates one copy of $G$ on the vertex set formed by the first vertices of each $H$ copy, while the second term $\id_G\otimes A_H$ creates the four copies of $H$.}
  \label{fig:hp}
\end{figure}

If $G$ and $H$ are graphs, then $G \uhp H$ may be thought of as one copy of $G$ with $\abs{G}$ copies of $H$, each attached to a different vertex of $G$ (see Fig.~\ref{fig:hp}).
Thus, $G\uhp H$ is a graph which has $\abs{G}$ \emph{modules} of $\abs{H}$ nodes each. The modules' internal connectivity is described by $H$, and the modules are connected to one another in a manner described by $G$. The hierarchical product formalism therefore naturally produces modular graphs. Its main advantage comes from the convenience of working with the algebra at the level of adjacency matrices and Laplacians, which in turn makes the computation of important properties of such graphs straightforward. 

We now present some properties of the hierarchical product which make it an attractive formalism for practical applications in quantum networking.

\subsubsection{Structural Properties} 
\label{sec:struc}
At the level of adjacency matrices, the hierarchical product is \emph{associative}. Let $A, B, C$ be three adjacency matrices. Then,
\begin{equation}
  \paren{A\uhp B}\uhp C = A\uhp\paren{B\uhp C}.
\end{equation}
For a proof, we refer the reader to Ref.~\cite{Barriere2009}.

Associativity implies that a product of multiple graphs does not depend on the order of evaluation. Therefore, we can unambiguously take the hierarchical product over many graphs to produce a graph of the form $G_k\uhp G_{k-1}\uhp\cdots\uhp G_1$. We will refer to such graphs as \emph{hierarchies}, and the $i$-th graph in the product $G_i$ as the $i$-th level of the hierarchy, enumerated from the bottom level upwards (symbolically, from right to left). In particular, if all $G_i$ are equal to some graph $G$, then we write 
\begin{align}
\label{eq:unwHPcomp}
G^{\uhp k} := \underbrace{G \uhp \cdots G\, \uhp}_{k-1\mathrm{\ times}} G.
\end{align} 
and refer to $G^{\uhp k}$ as a depth-$k$ (or $k$-level) hierarchy. 

Note that the hierarchical product \emph{does not} satisfy many properties which are commonly assumed for operations on matrices. In particular,
\begin{enumerate}
\item Bilinearity:  $\paren{A_1+A_2}\uhp B = A_1\otimes D_B + A_2\otimes D_B + \id_{(A_1+A_2)}\otimes B \neq A_1\uhp B + A_2\uhp B$. Similarly, $A\uhp\paren{B_1+B_2}\neq A\uhp B_1 + A\uhp B_2$.
\item Scalar multiplication: For any scalar $\alpha$, $\paren{\alpha A}\uhp B = \alpha A\otimes D_B + \id_A\otimes B \neq \alpha\paren{A\uhp B} \neq A\uhp \paren{\alpha B}$. Note however that scalar multiplication is distributive in the following way: $\alpha\paren{A\uhp B} = \paren{\alpha A}\uhp \paren{\alpha B}$.
\end{enumerate}

Hierarchical graphs are also instances of hyperbolic graphs. The Gromov-hyperbolicity \cite{Gromov2007}, which measures curvature and is small for a graph with large negative curvature, is only a constant for hierarchical graphs. Since the hyperbolicity in general is at most half the graph diameter, whereas in this case it is independent of the diameter, it is termed \emph{constantly  hyperbolic} in the parlance of Ref.~\cite{Chen2013}. Hyperbolic graphs are seen in several real-world complex networks \cite{Lohsoonthorn2003, Krioukov2010}, most notably the internet \cite{Montgolfier2011, Boguna2010}. Hyperbolic lattices have also been realized recently in superconducting circuits \cite{Kollar2018}.

Finally, hierarchies have low tree-, clique- and rank-widths, which are each measures of the decomposibility of a graph \cite{OUM200579}. These structural properties imply efficient algorithms for optimization problems expressible in monadic second-order (MSO) logic -- a class which, for arbitrary graphs, includes several NP-hard problems. This feature could potentially be used to solve circuit layout and optimization problems on modular architectures without resorting to heuristics. We refer the reader to Ref.~\cite{Courcelle1998} for details on these structural results. 

\subsubsection{Scalability} 

So far we have discussed hierarchies in which the edges in different levels of the hierarchy are equally weighted. However, one useful generalization would be to allow the weight of edges at each layer of the hierarchy to vary. The meaning of this weight could vary depending on the context. In some cases, weights can be used to quantify the costs of an edge (\textit{cost weight}). In others, we may wish to use weighted edges to quantify the power or performance of a network, interpreting edge weights as the strength of terms in a Hamiltonian or, inversely, the time required to communicate between nodes (\textit{time weight}). 

In this work, we prefer to remain agnostic to the meaning of the weights as much as is possible. When we calculate graph properties in Sec.~\ref{sec:comparison}, we will do so without reference to the meaning of the weights. 
In general, we will allow a graph to assign multiple kinds of weights to its edges, and each type of weight might scale differently. For now, we define a generalization of the hierarchical product which will allow us to construct hierarchies that incorporate different weights at different levels of the hierarchy.
\begin{definition}
\label{def:hp}
Given graphs $G$ and $H$, and $\alpha\in\mathbb{R_+}$, the \emph{$\alpha$-weighted hierarchical product} $P = G\whp H$ is a graph on vertices $V_P = V_G\times V_H$ and edges $E_P \subseteq V_P\times V_P$ specified by the adjacency matrix
\begin{equation}  
  A_P = \alpha A_G \otimes D_H + \id_G \otimes A_H,
\end{equation}
or, equivalently, by the algebraic Laplacian
\begin{equation}
  L_P = \alpha L_G \otimes D_H + \id_G \otimes L_H .
\end{equation}
We will often use the shorthand $A_P = A_G\whp A_H$, and $L_P = L_G\whp L_H$. 
\end{definition}

As before, we may construct a $k$-level, \emph{weighted} hierarchy out of $k$ base graphs $G_1, \ldots, G_k$, and $k$ weights $\alpha_i,\ldots, \alpha_k\equiv \vec{\alpha}$, so that the edges of the $i$-th level graph $G_i$ are weighted by the $i$-th component of $\vec{\alpha}$, $\alpha_i$. The adjacency matrix of such a hierarchy may be written as
\begin{equation}
\label{eq:hierarchysum}
A^{\vhp k} := \suml{i=1}{k}{\alpha_i \id_{[i+1 \isep k] }\otimes A_i \otimes D_{[1 \isep i-1]}},
\end{equation}
where the subscripts $[a\isep b]$ on $\id$ and $D$ are shorthand for the Kronecker product of matrices over all descending indices in the integer interval $[a\isep b]$. For instance, $D_{[1\isep i-1]} := D_{G_{i-1}}\otimes D_{G_{i-2}}\otimes \cdots \otimes D_{G_1}$. 

Defined as above, a weighted hierarchy $G^{\vhp k}$ is uniquely and efficiently specified by a real vector of weights $\vec{\alpha}\in\mbb{R_+}^k$ and an ordered tuple of graphs $\paren{G_1,\ldots, G_k}$. It will be the case that our analyses are unaffected by an overall scaling of the weight vector, so that one may identify $\vec{\alpha} \equiv c\vec{\alpha}$ for any real scalar $c$. As convention, we will always normalize by setting $\alpha_1=1$, which corresponds to assigning a unit-weight multiplicative factor to the lowest-level graphs in the hierarchy. 

We can construct the adjacency matrix of the graph $G^{\vhp k}$ by repeated application of the two-fold product (Def.~\ref{def:hp}) in some well-defined way, analogous to \cref{eq:unwHPcomp}. However, unlike before, the weighted product is non-associative, so we must first define an order of operations for manifold weighted products. Unless otherwise specified, we will always evaluate a manifold product from \emph{right to left}, which corresponds to building the hierarchies from the bottom up, and is required in order to ensure that this definition matches \cref{eq:hierarchysum}. For example, in the 3-fold product $A_3\uhp_{\alpha_3} A_2 \uhp_{\alpha_2} (\alpha_1A_1)$, we will first evaluate the product $A_2 \uhp_{\alpha_2} (\alpha_1 A_1)$, and then take the product of $A_3$, weighted by $\alpha_3$, with the resulting graph. The final result is
\begin{equation} \label{eq:explicit_kron}
  \alpha_3A_3\otimes D_2\otimes D_1 + \alpha_2\id_3\otimes A_2\otimes D_1 + \alpha_1\id_3\otimes\id_2\otimes A_1.
\end{equation}
In fact, a $k$-fold product, when evaluated this way, matches the right hand side of \cref{eq:hierarchysum}. Therefore, the $k$-level weighted hierarchy can also be written unambiguously as 
\begin{equation}
\label{eq:hpcomposition}
A^{\vhp k} = A_k \mathbin{\Pi_{\alpha_{k}}} A_{k-1} \mathbin{\Pi_{\alpha_{k-1}}} \cdots \mathbin{\Pi_{\alpha_{2}}} (\alpha_1 A_1).
\end{equation}
Henceforth, the weight $\alpha_1$, which scales the lowest-level adjacency matrix $A_1$, will be dropped due to our normalization choice of $\alpha_1 = 1$. 

An important class of hierarchy graphs is one where the level weights follow a geometric progression of weights, i.e., $\alpha_i = \alpha^{i-1}$. We will denote such hierarchies by $G^{\whp k}$, where the scalar subscript $\alpha$ will be understood to mean the mutual weighting between successive hierarchies. For $\alpha > 1$, this leads to a ``fat tree'' structure, while for $\alpha< 1$, we instead get a ``skinny tree'' for which the edge weights decrease between consecutive levels from the leaves to the root. These constructions are illustrated in Fig.~\ref{fig:trees}, and mentioned because fat trees are known to be a commonly used architecture in classical networks \cite{Leiserson1985}.

\begin{figure}[htpb]
	\centering
	\includegraphics[width=.45\textwidth]{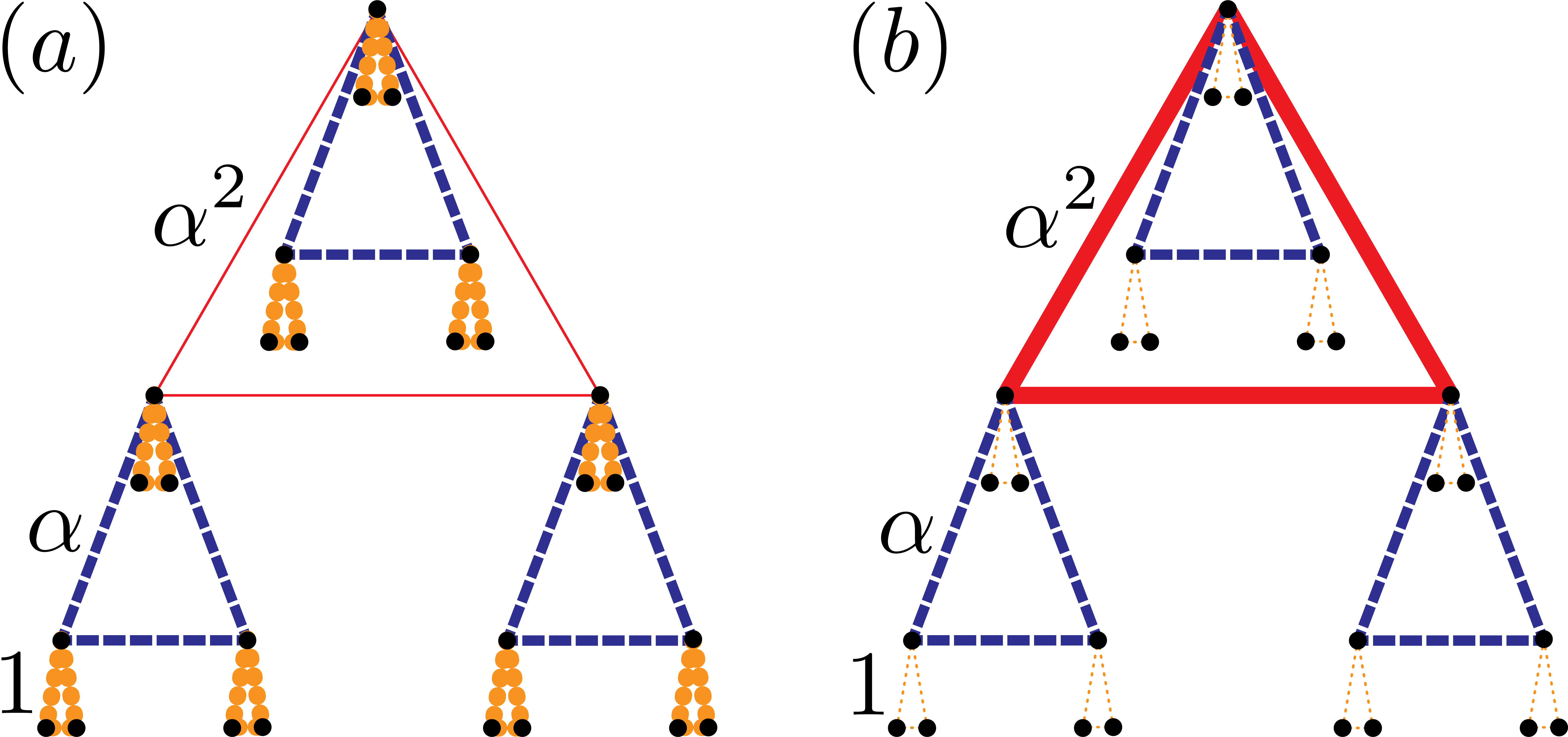}
	\caption{An illustration of the use of the hierarchical product to produce (a) ``skinny'' and (b) ``fat'' trees. In each case, the hierarchy $K_3^{\whp 3}$ is drawn, with the thickness of edges illustrating the weight of those edges. Depending on whether $\alpha < 1$ or $\alpha > 1$, this can lead to either lower-weighted high-level edges as in (a) or higher-weighted ones as in (b). Note that, for ease of visualization, here we break the usual convention of taking the lowest-level edges as unit weight.} 
	\label{fig:trees}
\end{figure}

Allowing a clear separation of the modular system into hierarchical levels, each of which can be assigned unique edge weight, enables straightforward discussion of computation that occurs both within and between modules in a unified framework. When two nodes interact, we can assign this a cost that depends on the edges between them.

\begin{figure}[htbp]
  \centering
  \includegraphics[width=.45\textwidth]{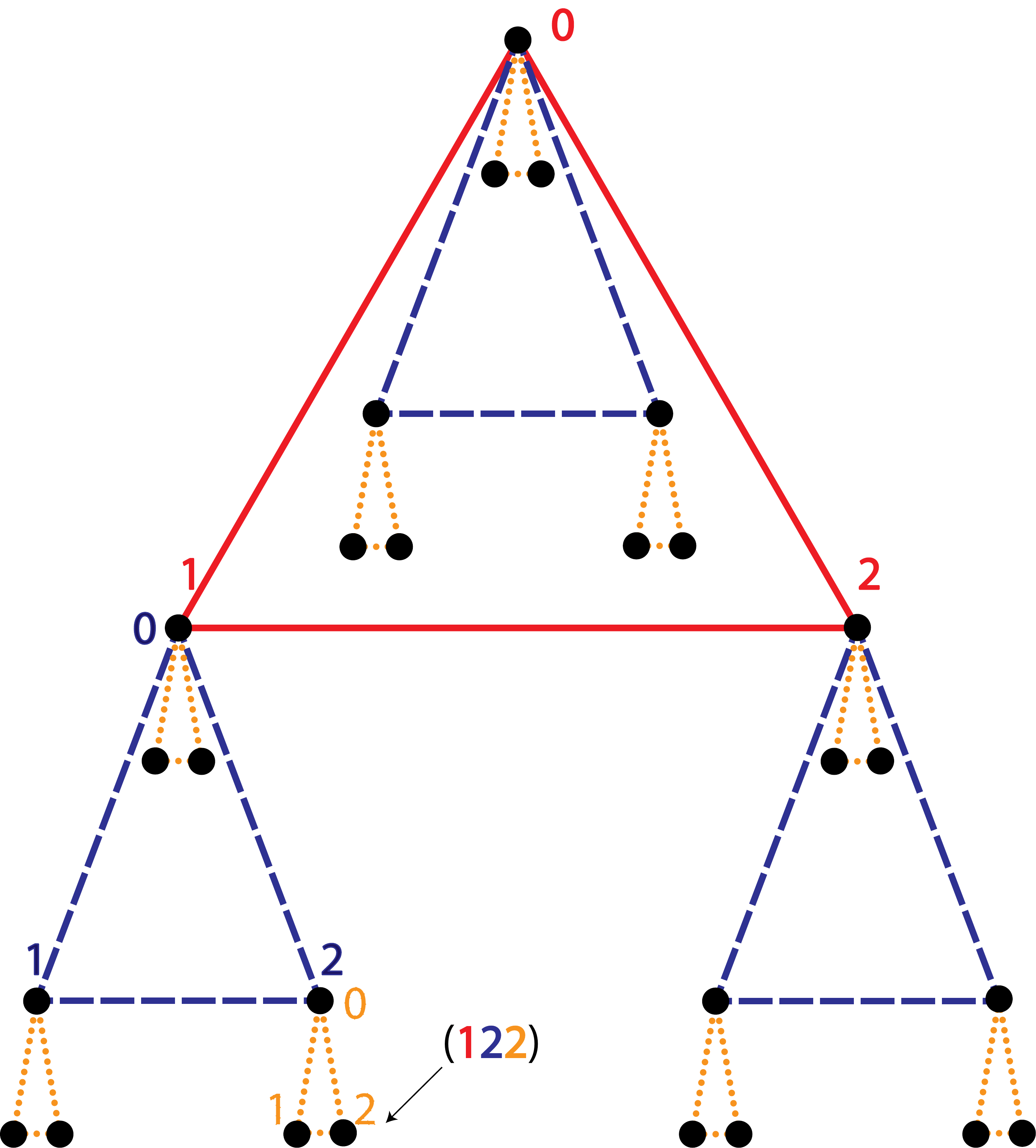}
  \caption{Addressing nodes in the hierarchy, layer by layer. Shown is a three-level hierarchy with the triangle graph $K_3$ as its base. Each vertex is represented as a 3-digit number in base 3. The first digit points to a node at the top level (red solid triangle), the second to a location in the second level (blue dashed triangle), and finally, the last digit (yellow dotted triangle) specifies the node location completely.}
  \label{fig:hplayers}
\end{figure}
\subsubsection{Node Addressal} 
\label{sec:addressal}
A hierarchy on $N$ nodes gives a natural labeling of the nodes. Suppose the hierarchy $H$ contains $k$ levels and each level is described by a graph $G$ with $|G|=n$ nodes, where $n^k=N$. Label the vertices of $G$ by indices $j=0,1,\ldots,n-1$. Then, the adjacency matrix $\id_G\otimes G$ (which corresponds to $n$ disjoint copies of $G$) has vertices which may be labeled as $(jk)$, where $j,k=0,1,\ldots n-1$. The first label identifies which copy of $G$ the node occurs in, while the second identifies where in $G$ it appears. The same vertex labeling can then be used for the 2-level hierarchy $G\uhp G$. In this manner, the $k$-level hierarchy has $n^{k}$ vertices with labels of the form $(b_1b_2\cdots b_k)$, where $b_i\in\curly{0,1,\ldots, n-1}$ for all $i$. This is essentially a $k$-digit, base-$n$ representation of numbers from $0$ to $N=n^k - 1$, as illustrated in Fig.~\ref{fig:hplayers}.

This node addressal scheme allows for each node to be uniquely identified in a way that simultaneously describes its connectivity to other nodes and allows for easy counting of how many nodes lie in either the entire graph or in particular subgraphs. This addressal scheme will be important for describing a variant of hierarchies in Sec.~\ref{sec:flexible} and for implementing the graphs in software, e.g.\ as used to generate the numerical results in Sec.~\ref{sec:probabilistic}.

\subsubsection{Spectral Properties} 
\label{sec:spectral}
One of the tools frequently used in analyzing large networks is the spectral decomposition of the Laplacian. The behavior of the largest eigenvalue, the first eigenvalue gap, and the distribution of eigenvalues as a function of the network parameters are some of the diagnostics that can provide key information about dynamical processes on the network, and can also be used as points of comparison between competing network topologies \cite{Newman2000}.

The smallest eigenvalue of a Laplacian is always $\lambda_1=0$, which corresponds to the uniform eigenvector $\myvec{e}_1 = \paren{1, 1, \ldots, 1}$. In ascending order, the eigenvalues of $L$ may be denoted by $0=\lambda_1 \le \lambda_2 \le \cdots \le \lambda_N$. We now state some graph properties that can be related to the spectrum of $L$ \cite{Newman2000,Mohar1991}.

The second eigenvalue $\lambda_2$ is known as the \emph{algebraic connectivity} of the graph and is closely related to the expansion and connectivity properties of the graph. Broadly, the larger the value of $\lambda_2$, the better the connectivity of the network. To illustrate this point, consider the graph diameter, $\delta(H)$, which can be bounded using $\lambda_2$ as follows:
\begin{equation}
\label{eq:diambound}
  \frac{4}{N\lambda_2} \le \text{diam}\paren{H} \le 2\ceil[\Big]{\frac{\Delta+\lambda_2}{4\lambda_2}\ln\paren{N-1}},
\end{equation}
where $\Delta$ is the maximum degree of $H$. It can be seen that a larger value for $\lambda_2$ will lead to a smaller graph diameter. We also have the following asymptotic bound on the mean distance between nodes, $\bar{\rho}(H)$:
\begin{equation}
\label{eq:mdbound}
  \frac{2}{(N-1)\lambda_2(H)} + \frac{1}{2} \lesssim \bar{\rho}(H) \lesssim \ceil[\Big]{\frac{\Delta+\lambda_2}{4\lambda_2}\ln\paren{N-1}}.
\end{equation}
Another important diagnostic of a network is given by the \emph{Cheeger constant} $h(H)$ \cite{Chung1997}, also called the isoperimetric number or the graph conductance. This graph invariant is a measure of how difficult the graph is to disconnect by cutting edges. For a connected graph, this number is always positive. As benchmark values, the complete graph $K_N$ has Cheeger constant $N/2$ while a cycle graph $C_N$ has Cheeger constant $4/N$. The relationship between $\lambda_2$ and $h(H)$ can be seen through the following bounds:
\begin{equation}
\label{eq:cheegerbound}
  \frac{\lambda_2}{2} \le h(H) \le \sqrt{\lambda_2\paren{2 \Delta - \lambda_2}}.
\end{equation}
Many other graph properties may be derived from the Laplacian spectrum as well (see, e.g., Refs.~\cite{Newman2000, Mohar1991}).

For a large network, finding the eigenvalues can be numerically expensive. However, hierarchies have a special structure which can be exploited for the evaluation of graph spectra. Here, we show (in Theorem \ref{th:specMain}) that if the spectra of the base graphs $L_i$ are known, then one can derive the spectrum of the $k$-level hierarchy efficiently using a recursive procedure. We first present two lemmas. The first lemma generalizes Theorem 3.10 from Ref.~\cite{Barriere2009}, which states that the characteristic polynomial $\phi_P(x)$ $\paren{= \det\brac{x\id - P}}$ of an unweighted hierarchical product of adjacency matrices $A$, $B$ is given by 
\begin{equation} \label{eq:thm310}
  \phi_P(x) = \phi_{B\pr}\paren{x}^{n_A}\phi_A\paren{\frac{\phi_{B}\paren{x}}{\phi_{B\pr}\paren{x}}},
\end{equation}
where $A\pr$ (resp.\ $B\pr$) is the matrix $A$ (resp.\ $B$) with the first row and first column removed, and $n_A = |G_A|$ is the order of the graph $A$. In fact, \cref{eq:thm310} applies to Laplacians as well as adjacency matrices. The lemma below further generalizes this statement to a weighted product of Laplacians.
\begin{lemma}
\label{lm:spec1}
  Let $K$ and $L$ be two graph Laplacians with characteristic polynomials given by $\phi_{K}(x)$ and $\phi_L(x)$, respectively. Then, the characteristic polynomial $\phi_{\uhp}(x)$ of the hierarchical product $K \uhp_\alpha L$  is given by
\begin{equation}
\label{eq:HPchar}
  \phi_{\uhp} \paren{x} = \brac{\alpha \phi_{L\pr}\paren{x}}^{n_K}\phi_K\paren{\frac{1}{\alpha}\frac{\phi_{L}\paren{x}}{\phi_{L\pr}\paren{x}}},
\end{equation}
where $n_k = \dim \curly{K}$, and $L\pr$ is defined similar to $A\pr$ and $B\pr$ above.
\end{lemma}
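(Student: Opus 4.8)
The plan is to compute $\phi_{\uhp}(x) = \det\paren{x\id - P}$ directly, where by \cref{def:hp} the product Laplacian is $P = \alpha K\otimes D_L + \id_K\otimes L$, a matrix of order $n_K n_L$. The structural fact I would exploit is that $K$, being the Laplacian of a simple (hence undirected) graph, is real symmetric and therefore orthogonally diagonalizable. Writing $K = U\Lambda U^T$ with $\Lambda = \mathrm{diag}\paren{\mu_1,\ldots,\mu_{n_K}}$ the eigenvalues of $K$, I would conjugate the characteristic matrix $x\id - P$ by $U\otimes\id_L$. Since conjugation leaves the determinant unchanged and, via $\paren{A\otimes B}\paren{C\otimes D} = AC\otimes BD$, sends $K\otimes D_L\mapsto\Lambda\otimes D_L$ while fixing $\id_K\otimes L$, the matrix becomes block diagonal with $n_K$ blocks, the $i$-th being $\paren{x\id_L - L} - \alpha\mu_i D_L$. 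Hence
\begin{equation}
\phi_{\uhp}(x) = \prodl{i=1}{n_K}{\det\brac{\paren{x\id_L - L} - \alpha\mu_i D_L}}.
\end{equation}

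Next I would evaluate each block determinant using the defining feature of $D_L$: it has a single nonzero entry, a $1$ in the top-left slot, so $\alpha\mu_i D_L$ perturbs only the $(1,1)$ entry of the characteristic matrix $M := x\id_L - L$, whose determinant is $\phi_L(x)$. Expanding by linearity in the first column of $M - \alpha\mu_i D_L$, which is $\paren{\text{first column of }M} - \alpha\mu_i\,\myvec{e}_1$, yields $\det\paren{M - \alpha\mu_i D_L} = \det M - \alpha\mu_i\det\tilde M$, where $\tilde M$ is $M$ with its first column replaced by $\myvec{e}_1$. A cofactor expansion of $\det\tilde M$ along that column leaves $\det\paren{x\id_{L\pr} - L\pr}$, so $\det\tilde M = \phi_{L\pr}(x)$ and thus $\det\paren{M - \alpha\mu_i D_L} = \phi_L(x) - \alpha\mu_i\phi_{L\pr}(x)$.

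Assembling gives $\phi_{\uhp}(x) = \prod_i\brac{\phi_L - \alpha\mu_i\phi_{L\pr}}$. Factoring $\alpha\phi_{L\pr}$ out of each of the $n_K$ factors and recalling that $\phi_K(y) = \prod_i\paren{y-\mu_i}$, I would set $y = \frac{1}{\alpha}\phi_L/\phi_{L\pr}$ to recover exactly $\brac{\alpha\phi_{L\pr}(x)}^{n_K}\phi_K\!\paren{\frac{1}{\alpha}\frac{\phi_L(x)}{\phi_{L\pr}(x)}}$, as claimed.

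The main subtlety to address is that the target expression contains the ratio $\phi_L/\phi_{L\pr}$ and is only manifestly a rational function, whereas $\phi_{\uhp}$ is a genuine polynomial; the computationally safe object throughout is the factored product $\prod_i\brac{\phi_L - \alpha\mu_i\phi_{L\pr}}$, which is polynomial by construction. To finish, I would note that after distributing $\brac{\alpha\phi_{L\pr}}^{n_K}$ the two sides agree at every $x$ with $\phi_{L\pr}(x)\ne 0$ --- all but finitely many points --- and two polynomials agreeing at infinitely many points coincide. I also note that the derivation uses no property of $L$ beyond its being a square matrix, so the same computation that establishes the adjacency-matrix formula \cref{eq:thm310} applies verbatim to the $\alpha$-weighted Laplacian of \cref{def:hp}.
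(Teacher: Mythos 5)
Your proof is correct, and it diverges from the paper's own proof at the decisive step. Both arguments begin identically: conjugate $x\id - \paren{\alpha K\otimes D_L + \id_K\otimes L}$ by (a diagonalizer of $K$) $\otimes\, \id_L$ to obtain $n_K$ blocks of the form $x\id_L - L - \alpha\mu_i D_L$. At that point the paper does not evaluate the blocks at all; it instead observes that $K\whp L = \paren{\alpha K}\uhp L$ and invokes the previously known unweighted formula \cref{eq:thm310} (Theorem 3.10 of Ref.~\cite{Barriere2009}) with $A=\alpha K$, $B=L$, combined with the rescaling identity $\phi_{\alpha K}(x) = \alpha^{n_K}\phi_K\paren{x/\alpha}$, to land on \cref{eq:HPchar} in one line. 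You instead finish the computation by hand: multilinearity in the first column plus a cofactor expansion gives each block determinant as $\phi_L(x) - \alpha\mu_i\phi_{L\pr}(x)$, and factoring $\alpha\phi_{L\pr}$ out of $\prodl{i=1}{n_K}{\brac{\phi_L - \alpha\mu_i\phi_{L\pr}}}$ recovers the claim. Your route buys several things: it is self-contained (it never needs Ref.~\cite{Barriere2009}); it substantiates the paper's otherwise unproved side remark that \cref{eq:thm310} applies to Laplacians --- indeed to any square matrix in the second slot; it explicitly handles the rational-function-versus-polynomial subtlety, on which the paper is silent; and the intermediate factorization $\prodl{i=1}{n_K}{\brac{\phi_L - \alpha\mu_i\phi_{L\pr}}}$ makes Lemma~\ref{lm:spec2} --- solving $\alpha\kappa_i = \phi_L/\phi_{L\pr}$ --- immediate. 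What the paper's route buys is brevity: given the cited theorem, the weighted case reduces to the unweighted one by a simple rescaling, at the minor cost of applying a statement originally formulated for adjacency matrices to the scaled Laplacian $\alpha K$.
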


\begin{proof}

Denote the spectra of $K$ and $L$ by $\curly{\kappa_j}$ and $\curly{\lambda_j}$, respectively. Recall that the $\alpha$-weighted hierarchical product may be written as
\begin{equation}
  K\whp L = \alpha K\otimes D_L + \id_K \otimes L.
\end{equation}
If $U_K$ is a unitary that diagonalizes $K$, we conjugate the above equation with the unitary $U_K\otimes\id_L$, and look at the resulting block matrix. Each block corresponds to an eigenvalue of $K$, and thus the $j$-th block is given by  $\alpha\kappa_jD_L + L$. The full spectrum may then be expressed as a disjoint union of the block spectra,  
\begin{equation}
 \text{spec}\paren{K\whp L} = \bigsqcup_{j=1}^{|K|}\text{spec}\paren{\alpha\kappa_jD_L + L}.
\end{equation}
Now, we apply \cref{eq:thm310} to $K\whp L \equiv \paren{\alpha K}\uhp L$ and use the fact that $\phi_{\alpha K}(x) = \det\brac{x\id - \alpha K} = \alpha^{n_K}\det\brac{\frac{x}{\alpha}\id - K}\equiv \alpha^{n_K}\phi_{K}\paren{\frac{x}{\alpha}}$. This yields Eq.\ \eqref{eq:HPchar}, as desired.
\end{proof}

Now we show that if the eigenvalues of $K$ and the polynomials $\phi_L$ and $\phi_{L\pr}$ are known, then there is a straightforward procedure to compute the eigenvalues of $K\whp L$. 
\begin{lemma}
\label{lm:spec2}
  Let $K$ and $L$ be graph Laplacians, as before. Each eigenvalue of the product characteristic polynomial $\phi_{\uhp}$ can be found as a solution of the equation
\begin{equation}
\label{eq:recursive}
  \alpha\kappa_i = \frac{\phi_{L}\paren{x}}{\phi_{L\pr}\paren{x}}
\end{equation}
for some $K$-eigenvalue $\kappa_i$.
\end{lemma}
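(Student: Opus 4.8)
The plan is to start directly from the factored characteristic polynomial supplied by Lemma~\ref{lm:spec1} and collapse it into a product of simple factors, each of which encodes one branch of the stated root equation. First I would write $\phi_K$ in terms of its own roots, namely the eigenvalues $\curly{\kappa_i}$ of $K$, as $\phi_K(y) = \prodl{i=1}{n_K}{(y-\kappa_i)}$. Substituting the argument $y = \frac{1}{\alpha}\frac{\phi_L(x)}{\phi_{L\pr}(x)}$ that appears in \cref{eq:HPchar} then turns $\phi_{\uhp}$ into the prefactor $\brac{\alpha\phi_{L\pr}(x)}^{n_K}$ multiplying a product of $n_K$ terms, each linear in the corresponding $\kappa_i$.

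The key manipulation is to recognize that the prefactor $\brac{\alpha\phi_{L\pr}(x)}^{n_K}$ is itself an $n_K$-fold product $\prodl{i=1}{n_K}{\alpha\phi_{L\pr}(x)}$, so it can be distributed one factor at a time into the product over $i$. Each resulting factor simplifies, since $\alpha\phi_{L\pr}(x)\paren{\frac{1}{\alpha}\frac{\phi_L(x)}{\phi_{L\pr}(x)} - \kappa_i} = \phi_L(x) - \alpha\kappa_i\phi_{L\pr}(x)$. This yields the fully factored form
\begin{equation}
  \phi_{\uhp}(x) = \prodl{i=1}{n_K}{\brac{\phi_L(x) - \alpha\kappa_i\,\phi_{L\pr}(x)}},
\end{equation}
from which the roots are read off immediately: $x$ is an eigenvalue of $K\whp L$ precisely when $\phi_L(x) - \alpha\kappa_i\phi_{L\pr}(x) = 0$ for some eigenvalue $\kappa_i$ of $K$. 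Dividing through by $\phi_{L\pr}(x)$ recovers exactly \cref{eq:recursive}.

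The one place that requires care — and the only step that is not routine algebra — is the division by $\phi_{L\pr}(x)$, i.e.\ the points where $x$ is a root of $\phi_{L\pr}$ (an eigenvalue of $L\pr$). At such an $x$ the rational equation \cref{eq:recursive} develops a pole and cannot be satisfied by any finite $\alpha\kappa_i$ unless $\phi_L(x)$ vanishes there as well. I would handle this by treating the polynomial identity above as the rigorous statement, valid for every $x$, so that no genuine eigenvalue is lost: a root of $\phi_{L\pr}$ contributes a root of $\phi_{\uhp}$ only when it is simultaneously a root of $\phi_L$, an indeterminate $0/0$ situation that the polynomial form resolves automatically. The ratio form \cref{eq:recursive} is then understood as the generic restatement valid away from these poles, which is precisely the regime used when the eigenvalues are solved for recursively in Theorem~\ref{th:specMain}.
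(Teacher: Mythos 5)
Your proof is correct and reaches the conclusion by a genuinely different route from the paper's. Both arguments start from Lemma~\ref{lm:spec1}, but the paper keeps $\phi_K$ abstract: it expands $\phi_{L\pr}\paren{x}^{n_K}\phi_K\paren{\tfrac{1}{\alpha}\phi_L(x)/\phi_{L\pr}(x)}$ in powers of the ratio, argues from the nonvanishing top-degree term of $\phi_K$ that a zero of $\phi_{\uhp}$ cannot be a root of $\phi_{L\pr}$, and only then equates the (now finite) ratio to a root of $\phi_K$. You instead factor $\phi_K$ into its linear factors and absorb the prefactor $\brac{\alpha\phi_{L\pr}(x)}^{n_K}$ one factor at a time, obtaining the polynomial identity $\phi_{\uhp}(x)=\prod_{i=1}^{n_K}\brac{\phi_L(x)-\alpha\kappa_i\,\phi_{L\pr}(x)}$ and reading the roots off with no division at all. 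Your route buys real rigor that the paper's argument lacks: the paper's indivisibility step tacitly assumes $\phi_L$ and $\phi_{L\pr}$ have no common factor, and this fails whenever $L$ and $L\pr$ share an eigenvalue, which by Cauchy interlacing is automatic whenever $L$ has a repeated eigenvalue. For instance, for $L$ the Laplacian of $K_3$ one has $\phi_L(x)=x(x-3)^2$ and $\phi_{L\pr}(x)=(x-1)(x-3)$; then $x=3$ annihilates every factor $\phi_L-\alpha\kappa_i\phi_{L\pr}$, so it is an eigenvalue of any product $K\whp L$ even though it is a root of $\phi_{L\pr}$ --- contradicting the paper's claim. In such degenerate cases the rational form \cref{eq:recursive} recovers these eigenvalues only after cancelling common factors, and can even miss them outright when the cancelled ratio takes a finite value outside the spectrum of $\alpha K$, so your polynomial identity (with the ratio form demoted to a generic restatement, exactly as in your last paragraph) is the correct rigorous formulation; the paper's shorter argument suffices in the generic coprime case, which is what Theorem~\ref{th:specMain} relies on in practice.
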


\begin{proof}
Any eigenvalue of the product graph must be a zero of the left-hand side of Eq.~(\ref{eq:HPchar}) and, by equality, a zero of the right-hand side. Now, the degree of polynomial $\phi_K$ is $n_K$, which implies that the term of degree $n_K$ must be nonzero. Thus, in the product $\phi_{L\pr}\paren{x}^{n_K}\phi_K\paren{\frac{1}{\alpha}\frac{\phi_{L}\paren{x}}{\phi_{L\pr}\paren{x}}}$, there must be a term which is \emph{indivisible} by the polynomial $\phi_{L\pr}\paren{x}$. Therefore, the zero of the right-hand side cannot be a root of the polynomial $\phi_{L\pr}$. 

We are seeking values of $x$ such that the polynomial $\phi_K\paren{\frac{1}{\alpha}\frac{\phi_{L}\paren{x}}{\phi_{L\pr}\paren{x}}}$ evaluates to zero. In other words, we are looking for $x$ such that the term $\frac{1}{\alpha}\frac{\phi_{L}\paren{x}}{\phi_{L\pr}\paren{x}}$ is a root of $\phi_K$. Therefore, we solve Eq.\ \eqref{eq:recursive} for $x$, for all roots $\kappa_i$ of $K$. 
\end{proof}

If the forms of $\phi_L$ and $\phi_{L\pr}$ are known (and if each have sufficiently low degree), then computing the roots of $\phi_{\uhp}$ becomes tractable, even if $K$ is a large matrix. This suggests a recursive procedure for computing the spectrum of a $k$-level hierarchy, by writing it as a product of the $\paren{k-1}$-level hierarchy with the $k$-th base graph. We now frame this as our main result of this section:
\begin{theorem}
  \label{th:specMain}
Suppose we have a $k$-level hierarchy $L^{\vhp k}$ described by base graph Laplacians $L_1,L_2,\ldots,L_k$ and weights $\vecalpha=\paren{1,\alpha_2,\ldots,\alpha_{k}}$ as follows,
\begin{equation}
  L^{\vhp k}=L_{k}\uhp_{\alpha_{k}}L_{k-1}\uhp_{\alpha_{k-1}}\cdots \uhp_{\alpha_3}L_{2}\uhp_{\alpha_{2}}L_1 .
\end{equation}
Define a new set of weights $\vecbeta=\paren{1,\beta_2,\ldots,\beta_{k}}$ with $\beta_i = \alpha_i/\alpha_{i-1}$, and a new set of Laplacians $M_{k},M_{k-1},\ldots, M_{1}$ recursively as 
\begin{align*}
M_k &= L_k , \\ 
M_i &= M_{i+1}\uhp_{\beta_{i+1}}L_i.
\end{align*}
Then, the following hold:
\begin{enumerate}
\item $M_1 = L^{\vhp k}$.
\item Any eigenvalue of $M_i$ (for $i<k$) may be found as a solution to the equation
  \begin{equation}
\label{eq:specRecursive}
    \beta_{i+1}\mu^{\paren{i+1}} = \frac{\phi_{L_i}(x)}{\phi_{L_i\pr}(x)}
  \end{equation}
for some $\mu^{\paren{i+1}}\in\text{spec}\curly{M_{i+1}}$.
\end{enumerate}

\end{theorem}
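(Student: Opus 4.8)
The plan is to handle the two claims separately, since part~(2) follows almost immediately from Lemma~\ref{lm:spec2} once part~(1) is established. For part~(1), I would prove by downward induction on $i$ (from $i=k$ down to $i=1$) the explicit Kronecker form
\begin{equation*}
M_i = \frac{1}{\alpha_i}\suml{j=i}{k}{\alpha_j\,\id_{[j+1\isep k]}\otimes L_j\otimes D_{[i\isep j-1]}},
\end{equation*}
where empty index intervals are read as the scalar $1$. The base case $i=k$ reduces to $M_k = L_k$ because both $\id_{[k+1\isep k]}$ and $D_{[k\isep k-1]}$ are empty. For the inductive step I would expand $M_i = M_{i+1}\uhp_{\beta_{i+1}}L_i = \beta_{i+1}M_{i+1}\otimes D_{L_i} + \id_{M_{i+1}}\otimes L_i$ via the Laplacian form in Definition~\ref{def:hp}, substitute the inductive expression for $M_{i+1}$, and collect terms.

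Two bookkeeping facts make the step go through. First, the prefactors combine as $\beta_{i+1}/\alpha_{i+1} = 1/\alpha_i$ by the definition $\beta_{i+1}=\alpha_{i+1}/\alpha_i$; this is exactly the cancellation that lets the weights telescope. Second, the tensor slots line up: $\id_{M_{i+1}} = \id_{[i+1\isep k]}$, and appending the extra $D_{L_i}$ factor in the rightmost slot upgrades $D_{[i+1\isep j-1]}$ to $D_{[i\isep j-1]}$, so that the $\beta_{i+1}M_{i+1}\otimes D_{L_i}$ piece supplies the $j\ge i+1$ terms while $\id_{M_{i+1}}\otimes L_i$ supplies the $j=i$ term. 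Setting $i=1$ and using the normalization $\alpha_1=1$ then yields $M_1 = \suml{j=1}{k}{\alpha_j\,\id_{[j+1\isep k]}\otimes L_j\otimes D_{[1\isep j-1]}}$, which is precisely the right-hand side of \cref{eq:hierarchysum}; hence $M_1 = L^{\vhp k}$.

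For part~(2), I would simply apply Lemma~\ref{lm:spec2} to the single weighted product $M_i = M_{i+1}\uhp_{\beta_{i+1}}L_i$, identifying $K=M_{i+1}$, $L=L_i$, and the weight parameter as $\beta_{i+1}$. The lemma then states that each eigenvalue of the product solves $\beta_{i+1}\mu = \phi_{L_i}(x)/\phi_{L_i\pr}(x)$ for some eigenvalue $\mu$ of $K=M_{i+1}$, i.e.\ for some $\mu^{(i+1)}\in\text{spec}\curly{M_{i+1}}$, which is exactly \cref{eq:specRecursive}. The only hypothesis to check is that $M_{i+1}$ and $L_i$ are genuine graph Laplacians: $L_i$ is one by assumption, and $M_{i+1}$ is a weighted hierarchical product of Laplacians, hence itself the Laplacian of the corresponding product graph by Definition~\ref{def:hp}, so Lemma~\ref{lm:spec2} does apply.

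I expect the only real obstacle to lie in the index bookkeeping of part~(1): keeping the descending-order conventions of $\id_{[a\isep b]}$ and $D_{[a\isep b]}$ consistent, verifying that the new $D_{L_i}$ factor attaches in the correct rightmost tensor slot, and confirming the $\beta_{i+1}/\alpha_{i+1}=1/\alpha_i$ cancellation that drives the telescoping. Nothing here is conceptually deep, but because the weighted product is non-associative one cannot be cavalier about the order of evaluation, so care is needed to ensure that the left-to-right recursion in the $\beta_i$ genuinely reproduces the right-to-left $\alpha_i$-weighted hierarchy of \cref{eq:hierarchysum}.
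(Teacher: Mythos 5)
Your proposal is correct and follows essentially the same route as the paper: part~(1) by unrolling the recursion and using the telescoping cancellation $\beta_{i+1}/\alpha_{i+1} = 1/\alpha_i$ (the paper does this informally via the first two steps and ``and so on,'' while you state the closed-form $M_i = \frac{1}{\alpha_i}\sum_{j\ge i}\alpha_j\,\id\otimes L_j\otimes D$ and induct explicitly), and part~(2) by direct application of Lemma~\ref{lm:spec2} with $K=M_{i+1}$, $L=L_i$, $\alpha=\beta_{i+1}$, exactly as in the paper.
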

\begin{proof}
First, we prove statement 1. It can be seen that 
\begin{align}
M_{k-1} &= M_k\uhp_{\beta_{k}}L_{k-1} = L_k\uhp_{\beta_{k}}L_{k-1} \nonumber \\ 
&= \frac{1}{\alpha_{k-1}}\paren{\alpha_{k}L_k\otimes D_{k-1} + \alpha_{k-1}\id_{k}\otimes L_{k-1}},\\
M_{k-2} &= M_{k-1}\uhp_{\beta_{k-1}}L_{k-2} \nonumber \\
&= \frac{1}{\alpha_{k-2}}(\alpha_{k}L_k\otimes D_{k-1}\otimes D_{k-2}\ + \nonumber \\
\alpha_{k-1}\id_{k}&\otimes L_{k-1}\otimes D_{k-2} + \alpha_{k-2}\id_{k-1}\otimes \id_{k-2}\otimes L_{k-2}),
\end{align} 
and so on, until we have an $\vecalpha$-weighted sum over all $k$ of the base graphs (with an overall denominator of $\alpha_1=1$), which is precisely $L^{\vhp k}$.

The proof of statement 2 follows as a direct consequence of Lemma \ref{lm:spec2}, with $K=M_{i+1}, L=L_{i},$ and $\alpha=\beta_{i+1}$.
\end{proof}
Theorem \ref{th:specMain} provides an algorithm to compute the spectrum of $L^{\vhp k}$, namely:
\begin{enumerate}
\item Compute the relative weight vector $\vecbeta$ from $\vecalpha$. 
\item Start with $i=k$, where the spectrum of $M_k = L_k$ is known. Decrease $i$ by one.
\item Compute the spectrum of $M_i$ from the known spectrum of $M_{i+1}$ and Eq.\ \eqref{eq:specRecursive}. Decrease $i$ by one.
\item Perform step 3 repeatedly, halting at $i=0$. Return the spectrum of $M_1 = L^{\vhp k}$.
\end{enumerate}
Therefore, given a large hierarchy, one can efficiently compute the Laplacian eigenvalues and use them to find bounds on important graph properties. This is a scalable technique for obtaining figures of merit efficiently for hierarchies. Later, in Sec.~\ref{sec:comparison}, we will present analytic results for some of these figures of merit for simple hierarchies, but the results of the current section can be used even in more complicated cases, such as hierarchies that do not use the same $G$ at every layer or that have heterogeneous scaling parameters. 

\begin{figure}[htb]
  \centering
  \includegraphics[scale=0.25]{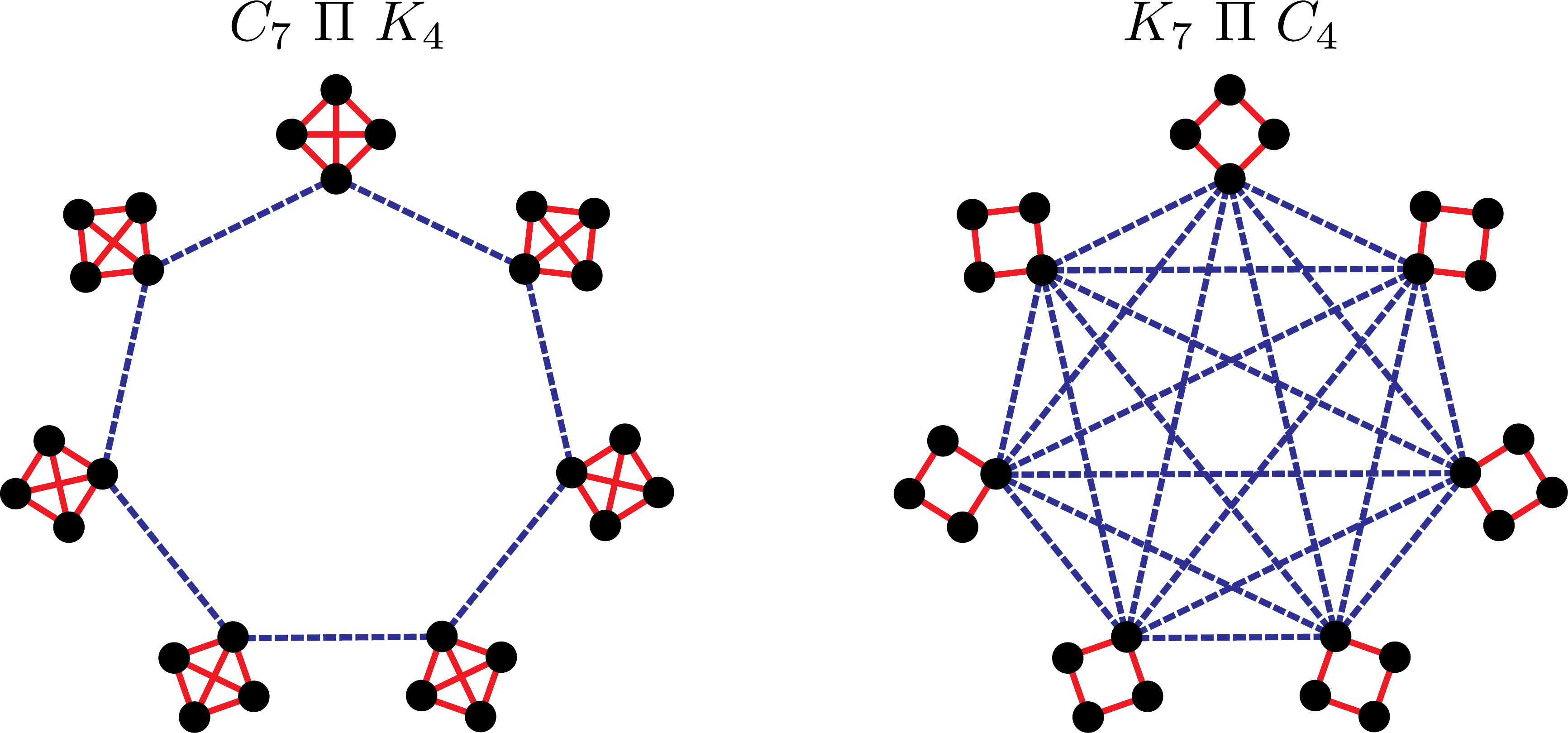}
  \caption{Two topologies with the same number of nodes (28) and edges (49). While the diameters for the two graphs are the same, are they equally well-connected? A comparison of the Cheeger constants (see \cref{tbl:connectivity}) suggests that the left graph is less interconnected. This is consistent with the spectral gap, which is smaller for the left graph, indicating poorer connectivity.}
  \label{fig:CK74}
\end{figure}

Due to the structural richness and heterogeneity of graphs, it is not always easy to decide whether one graph is, for instance, more connected than another graph. One aspect of connectivity is how close the nodes are to one another, which is captured by quantities like the diameter and mean distance. In Fig.~\ref{fig:CK74}, we compare two graphs, $C_7\uhp K_4$ and $K_7\uhp C_4$, which have an identical number of nodes (28) and edges (49). The two graphs also have identical diameters (5 each), but the mean distance for the left graph is smaller (see Table \ref{tbl:connectivity}). Under these measures, the left graph appears better connected. 

Better connectivity also corresponds to having fewer bottlenecks in the graph, which corresponds to a larger Cheeger constant. In Fig.~\ref{fig:CK74}, the graph on the right has a larger Cheeger constant, as one would expect given that it has complete connectivity between the seven modules. Note that this metric of connectivity need not agree with the mean distance, as seen in this example. 

Similarly, a parameter-by-parameter comparison of the two hierarchy graphs $C_{13}\uhp K_5$ and $K_{13}\uhp C_5$ (Table \ref{tbl:connectivity}) reveals that, while both graphs are two-level hierarchies with the same number of nodes and edges, $K_{13}\uhp C_5$ has the smaller diameter, smaller mean distance, larger cheeger constant, and a larger spectral gap, all of which indicate better connectivity. While structural comparisions for the above examples can be carried out simply by inspection or a quick calculation of graph quantities, general hierarchies may be far too complex to compare this way. In practice, when choosing a modular topology with the best connectivity, one might hope for a single, balanced measure of connectivity that relates to aspects such as node distance and bottleneckedness and is easy to compute. The spectral gap $\lambda_2$ meets these requirements. It is asymptotically related to the other invariants discussed here via upper and lower bounds in \cref{eq:diambound,eq:mdbound,eq:cheegerbound}. Furthermore, $\lambda_2$ can be efficiently computed using the recursive procedure described earlier in this section. 

\begin{table}
	\begin{tabular}{l | l l | l l}
		Graph Invariant & $C_7\uhp K_4$ \text{vs.}& $K_7\uhp C_4$ & $C_{13}\uhp K_5$ \text{vs.} & $K_{13}\uhp C_5$ \\ \hline
		Number of edges & 49 & 49 & 143 & 143 \\
		Number of nodes & 28 & 28 & 65 & 65 \\
		Diameter & \underline{5} & \underline{5} & 8 & \underline{5} \\
		Mean distance & \underline{2.68} & 2.71 &  4.77 & \underline{3.23} \\
		Cheeger constant & 0.17 & \underline{1.0} & 0.07 & \underline{1.4} \\
                Spectral gap $\lambda_2$ & 0.16 & \underline{0.46} & 0.04 & \underline{0.34} \\
	\end{tabular}
	\caption{Comparison of topologies by connectivity measure. In each case, the graphs being compared have an identical number of nodes and edges.   The better value for each comparison is underlined.}
	\label{tbl:connectivity}
\end{table}

\subsubsection{Truncated Hierarchical Product}
\label{sec:flexible}
\begin{figure}[htpb]
	\centering
	\includegraphics[width = .45\textwidth]{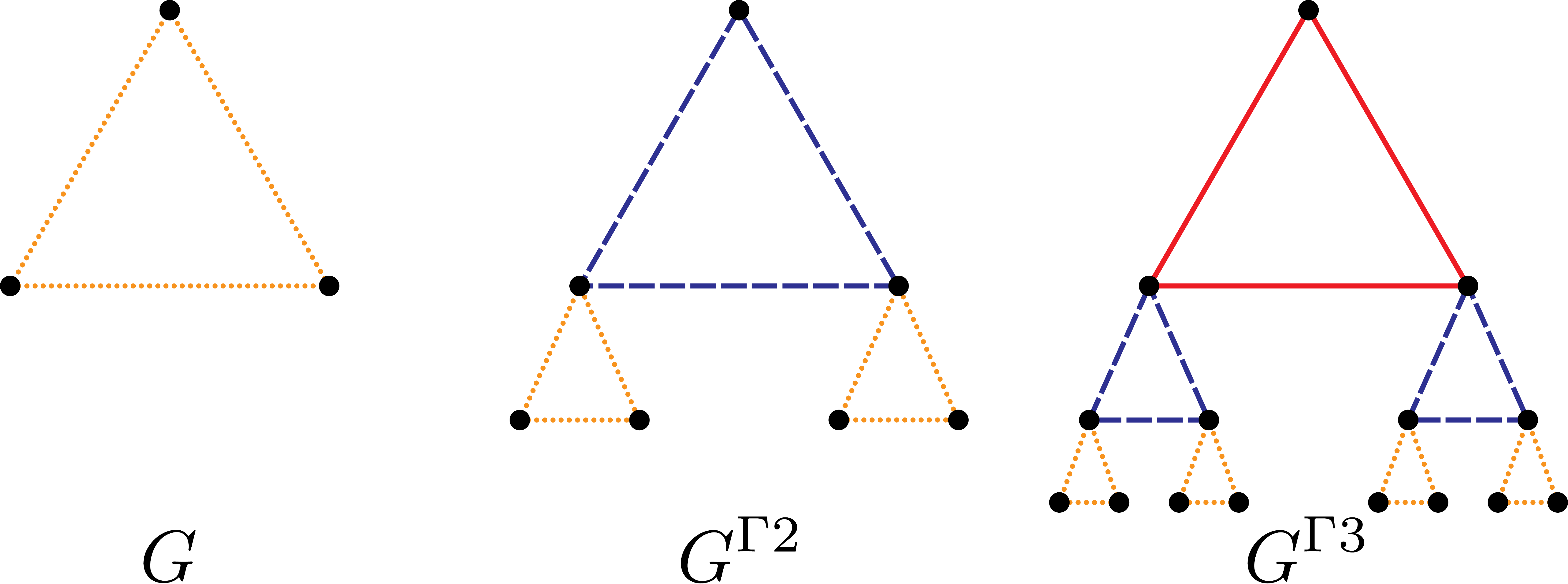}
	\caption{A demonstration of how our hierarchical product can be truncated to avoid requiring many interconnections at one node. As the hierarchy grows, the graph is duplicated and then attached to a subset of nodes in a larger version of the base graph, $G$.}
	\label{fig:comm_qubit}
\end{figure}
In some scenarios, there may be physical or technological limitations on the total number of interconnections allowed at a single node of a quantum computer. In our framework, this manifests as a restriction on the maximum degree of a node. We believe that hierarchical structures can still prove useful in this context, but (as we will see in Sec.~\ref{sec:comparison}) the hierarchy we have described thus far has a maximum degree which grows linearly with the number of levels of the hierarchy.

We now introduce an architecture which maintains the hierarchical properties but also has a bounded maximum node degree (i.e.\ maximum node degree that does not go to infinity as the number of levels goes to infinity). 
To model such an architecture, we modify the hierarchical product $G_1 \uhp G_2$. Whereas previously, $\abs{G_1}$ copies of $G_2$ were connected according to $G_1$, we now bring together $\abs{G_1} - 1$ copies, which we connect according to $G_1$, and add the root node of $G_1$ without an associated subhierarchy (see Fig.~\ref{fig:comm_qubit}). When extended to a many-level hierarchy, this means that every node will be connected to, at most, two levels, and so its degree will not grow as the hierarchy grows. We will denote this \textit{truncated} hierarchical product by $G_1 \tuhp G_2$, and its weighted version as $G_1 \twhp G_2$. It can be written algebraically in terms of adjacency matrices by adopting a more general definition of the hierarchical product.
\begin{definition}
\label{def:trunchp}
Given rooted graphs $G$ and $H$, the \emph{weighted truncated hierarchical product} $P = G\twhp H$ is a graph on vertices $V_P = V_G\times V_H$ and edges $E_P \subseteq V_P\times V_P$ specified by the adjacency matrix
\begin{equation}
  A_P = \alpha A_G \otimes D_H + P_G \otimes A_H,
\end{equation}
or, equivalently, the algebraic Laplacian
\begin{equation}
  L_P = \alpha L_G \otimes D_H + P_G \otimes L_H.
\end{equation}
Here, $P_G$ is a projector onto all nodes in $G$ except the root node. At the level of adjacency matrices, we may also write $A_P = A_G\twhp A_H$. An unweighted version, $G \tuhp H$, can be obtained by setting $\alpha = 1$. 
\end{definition}
An illustration of this architecture can be found in Fig.~\ref{fig:comm_qubit}. From this definition, we naturally derive both uneweighted and weighted truncated hierarchies, $G^{\tuhp k}$ and $G^{\tvhp k}$. We note that a generalization of this definition to allow an arbitrary projector (rather than one that only excludes the root node) is possible, but we do not consider such a case in this paper.

The addressing scheme outlined in Sec.~\ref{sec:addressal} can also be used for truncated hierarchies. However, since many nodes do not sit atop sub-hierarchies in this case, not all node addresses are valid. We will assume that the node in the $i$-th level which connects to the level above it has a zero in the $i$-th digit of its address. In a truncated hierarchy, each node whose address contains a zero (representing the ``root'' of a hierarchy) must have only zeros in all following positions, as it does not contain any further sub-hierarchies. The base-$n$ addressal scheme can thus be used to specify which nodes are present in a truncated hierarchy.

Note that the truncated hierarchical product adds nodes more slowly than (although with the same scaling as) the hierarchical product structure specified at the beginning of Sec.~\ref{sec:hpdef}. When we perform graph comparisons in Sec.~\ref{sec:comparison}, we will consider all cost functions and optimizations in terms of the total number of nodes so that the two architectures can be compared fairly.

\section{Graph Comparisons}
\label{sec:comparison}
\begin{figure*}[tb]
	\centering
	\includegraphics[width=.85\textwidth]{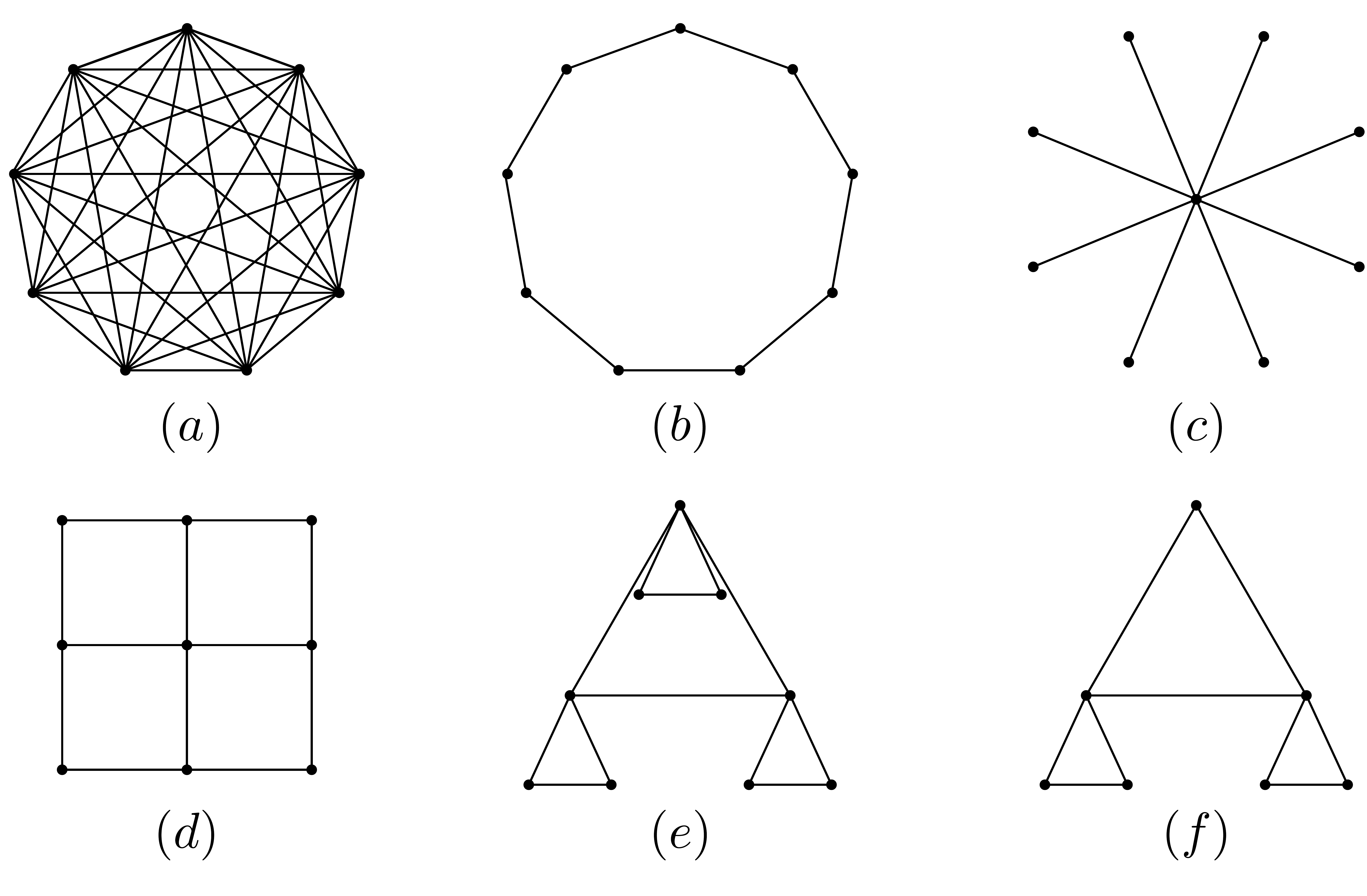}
	\caption{Illustration of the graph structures considered in this section, each with nine nodes except (f). (a) The complete graph $K_9$. (b) The cycle graph $C_9$. (c) The star graph $S_9$. (d) The nearest-neighbor grid in two dimensions. (e) The hierarchical product $K_3^{\uhp 2}$. (f) The truncated hierarchical product of Sec.~\ref{sec:flexible}, $K_3^{\tuhp 2}$.}
	\label{fig:graphs}
\end{figure*}
Having developed the machinery to construct hierarchies, we will now evaluate them against other potential architectures.
Any evaluation is impossible to do in an absolute sense, since what properties are desirable in a graph and how serious the cost of improving them is will depend on both the application as well as the physical system under consideration. In general, we assume that the most desirable quality of a graph is some measure of connectivity or the ease with which the graph can transport information between nodes.  
Note that it is always possible to translate between quantum circuit architectures with some overhead. A detailed atlas summarizing these overheads can be found in Ref.~\cite{Cheung2007}.

We will look at the scenario of state transfer, which is an important subroutine that may need to be carried out if an algorithm requires gates to be performed between two qubits that are not directly connected.
We consider the worst-case state transfer time on a given graph, which allows us to evaluate graphs without reference to any particular quantum algorithm. If we are interested in the time taken for state transfer in the graph, an appropriate metric can be the diameter of the graph, $\delta(H)$, under the assumption that information transfer takes unit time along any edge in the graph.
The diameter then captures the maximum distance, and hence the maximum time required for information to travel between any two nodes in the system.

For graphs produced by the weighted hierarchical product, we will also consider a diameter which takes into account edge weight. This ``weighted diameter,'' $\delta_w(H)$, can be found by considering all pairs of nodes $j,k$ and identifying the two whose least-weighted connecting path has the highest sum weight of edges. If we consider a path between two nodes $j$ and $k$ to be a set of nodes $P = \{j, v_1, v_2 \dots v_n, k\}$ with a weight $W(P)$ given by the sum $w_{j,v_1} + w_{v_1, v_2} + \dots +  w_{v_n, k}$, then the weighted diameter can be written as:
\begin{equation}
	\delta_w(H) = \max_{j,k} \min_P W(P).
\end{equation}
One way to grasp why the weighted diameter is a useful quantity is to consider the time weights of edges, where the weight signifies the time required to perform a gate between two connected qubits. In this case, the weighted diameter is the maximum time it will take us to perform a chain of two-qubit gates that connects two different qubits (for instance, using SWAP operations to bring the two qubits to adjacent positions and then performing the final desired operation). 

However, optimizing \emph{only} with respect to connectivity yields a trivial result, because a fully connected graph is obviously most capable of communicating information between any two points. Therefore, we will consider a number of different possible ``costs'' associated with physical implementations of graphs. One potential input to the cost function is the maximum degree of a graph, $\Delta(H)$. 
As discussed in the previous section, we want to avoid needing to connect too many different communication channels to a single node. 
Another is total edge weight $w(H)$ -- if it costs time, energy, money, coherence, or effort to produce communication between two nodes, we should try to use as few communication channels as possible.

We now walk through the calculations for several important graph quantities for several graphs: an all-to-all connected graph, a cycle graph, a star graph, a square grid, a hierarchy graph with scaling parameter $\alpha$, and a truncated version of that same hierarchy graph. We calculate how quantities scale with the total number of nodes $N$. For ease of calculation, we assume that $N$ nodes fit in the architecture of the current graph; for instance, we assume $N = \ell^d$ for some integer $\ell$ for a $d$-dimensional square graph. All results of this section are compiled in Table~\ref{tbl:graphs}, and examples of the graphs for small $N$ are illustrated in Fig.~\ref{fig:graphs}.

\setlength\tabcolsep{.5cm}
\begin{table*}[htpb]
	\begin{tabular}{l | l l l l}
		Graph $H$ & Diameter $\delta$  & Weighted Diameter $\delta_w$ & Maximum Degree $\Delta$ & Total Edge Weight $w(H)$ \\ \hline
		$K_N$ & const. & const. & $N$ & $N^2$ \\
		$S_N$ & const. & const. & $N$ & $N$ \\
		$C_N$ & $N$ & $N$ & const. & $N$ \\
		Square grid, $d$-dim & $d N^{1/d}$ & $d N^{1/d}$ & $d$ & $dN$ \\ 
		$K_n^{\whp k}, \alpha \neq n $ & $ \log_n N $ & $\max\left(\frac{2}{1-\alpha}, N^{\log_n \alpha} \right)$ &  $n \log_n N $ & $nN^{\max\left( 1, \log_n \alpha \right)} $\\ 
		$K_n^{\whp k}, \alpha  = n $& $\log_n N $ &  $\max\left(\frac{2}{1-\alpha}, N^{\log_n \alpha} \right)$   & $ n \log_n N $ & $nN \log_n N$   \\
		$K_{n+1}^{\twhp k}, \alpha \neq n $ & $ \log_{n} N $ & $\max\left(\frac{2}{1-\alpha}, N^{\log_{n} \alpha} \right)$ &  $ n$ & $nN^{\max\left( 1, \log_{n} \alpha \right)} $\\ 
		$K_{n+1}^{\twhp k}, \alpha  = n $& $\log_{n} N $ &  $\max\left(\frac{2}{1-\alpha}, N^{\log_{n} \alpha} \right)$   & $ n $ & $nN \log_{n} N$   \\
	\end{tabular}
	\caption{Summary of scalings of important graph properties with total node number, $N$. All entries describe only the scaling of the leading coefficient with $d$, $n$, and $N$.}
	\label{tbl:graphs}
\end{table*}
\subsection{Graph Calculations}
\subsubsection{Complete Graph, $K_N$}
Since all nodes in a complete graph [\cref{fig:graphs}(a)] have edges between them, the diameter is simply 1. This comes at the cost of very high maximum degree, $N-1$, as every node is connected to all $N-1$ other nodes. The total weight of every edge is the same, and there are $N(N-1)/2$ edges because every pair of nodes has a corresponding edge. Therefore, the total edge weight scales as $\Theta(N^2)$. 

\subsubsection{Cycle Graph, $C_N$}
In a cycle graph [\cref{fig:graphs}(b)], the diameter is $\lfloor N/2 \rfloor$, the distance to the opposite side of the circle. The maximum degree is only 2, and the total weight of the edges is likewise only $N$. This graph is thus able to reduce the cost factors associated with the complete graph, but at the cost of a much higher asymptotic diameter.

\subsubsection{Star Graph, $S_N$}
The star graph is the graph which has a single central node connected to all others [\cref{fig:graphs}(c)]. Like the complete graph, it also has a constant diameter, although this diameter is two rather than one. The maximum degree of the star graph is $N-1$, the same as the complete graph. However, the star graph improves over the complete graph, as it has a lower total edge weight of $N-1$ rather than ${N \choose 2}$. Thus, we have improved the cost asymptotically without affecting the overall scaling of the diameter of the graph.

The example of $S_N$ raises a complication which we do not attempt to quantify in this paper. In a realistic distributed quantum computer, we expect that a significant amount of operations need to be performed at the same time and need to be scheduled on the graph. But in the star graph, all operations between nodes must pass through the single central hub. This is likely to lead to a scheduling bottleneck when performing general quantum algorithms. While we do not attempt to treat scheduling of such algorithms on the network in this paper, in future work we hope to consider these complications, which will at times make the star graph unsuitable for real-world use. An experimental comparison of the star graph and the complete graph in existing five-qubit quantum computers can be found in Ref.~\cite{Linke2017}. In those experiments, the requirement that all information be shuttled through a central node for the $S_N$ connectivity made high-fidelity execution of quantum algorithms more difficult.

\subsubsection{Square Grid Graph}
We consider now a square grid (i.e., a hypercubic lattice) in $d$ dimensions [\cref{fig:graphs}(d)]. Here, the diameter is $d (N^{1/d}-1)$, since this is the distance from the point in one corner labeled $(1 , 1 ,1, \dots)$ to the opposite corner at $(N^{1/d}, N^{1/d}, \dots )$ (note that diagonal moves are not allowed). The maximum degree depends on the dimension, as each interior node is connected to $2d$ other nodes.
The total edge weight can be found by considering that each node on the interior of the graph corresponds with exactly $d$ edges, and it is these edges that dominate as $N \to \infty$. Therefore, the total edge weight scales as $\Theta(d N)$. 

\subsubsection{Hierarchy Graph, $G^{\vhp k}$}
\label{sec:hierarchygraph}
As the hierarchy graph [\cref{fig:graphs}(e)] is built recursively, it is easiest to calculate its properties using recursion relations. We consider a graph that has $k$ levels to it, so that given a base graph $G$ and $n = \abs{G}$, then the overall graph has $n^k$ nodes. 

First, we calculate the unweighted diameter of a $k$-level hierarchy, which we denote by $\delta\paren{G^{\vhp k}}$. Since all sub-hierarchies are rooted at their first vertex, we will need to keep track of the \emph{eccentricity} of the root node, which we denote by $\varepsilon(F)$ for any subhierarchy $F$. The eccentricity of any graph node is defined as the maximum distance from that node to any other node in the graph $F$. Here, we fix $\varepsilon(F)$ to be the root eccentricity for the graph in question.

Now, we write recursion relations for two quantities, the unweighted diameter $\delta(G^{\vhp i})$ of an $i$-level hierarchy for some intermediate $i$, and the eccentricity $\varepsilon(G^{\vhp i})$ of the top-level root node of the current $i$-level hierarchy.

Consider a diametric path in an $i$-level hierarchy. This path must ascend and descend the entire hierarchy. That is, using the notation of Sec.~\ref{sec:addressal}, two maximally separated qubits have addresses that are different in their first digit. Such a path can always be partitioned into 3 disjoint pieces, the terminal two of which each lie in some $(i-1)$-level subhierarchy, while the middle piece lies in the current top (i.e. $i$-th) level. These three pieces must be independently maximal, since the path is diametric. The middle piece maximizes to the diameter of the top-level graph, which is simply $\delta(G)$. The two sub-level pieces each maximize to the root eccentricity of the $(i-1)$-th level subhierarchy, which is precisely the quantity $\varepsilon(G^{\vhp (i-1)})$. Therefore, our first recursion reads
\begin{equation}
  \delta(G^{\vhp i}) = 2\varepsilon(G^{\vhp (i-1)}) + \delta(G).
\end{equation}
The $i$-th level root eccentricity may be found by a similar argument. Partition the most eccentric path (starting at the top level root node) into two pieces, one which lies at the top level, and the other which lies exclusively in the lower levels. Maximizing both pieces, one gets
\begin{equation}
  \varepsilon(G^{\vhp i}) = \varepsilon(G^{\vhp (i-1)}) + \varepsilon(G).
\end{equation}
Solving the second relation, we get $\varepsilon(G^{\vhp i}) = i\varepsilon(G)$. By substitution, the first recursion has the solution
\begin{equation}
    \delta(G^{\vhp k}) = 2(k-1)\varepsilon(G) + \delta(G).
\end{equation}
Since the total number of levels is given by $k = \log_n N$, and the graph diameter is no greater than twice the eccentricity of any node, we conclude that the diameter scales as $\Theta(\varepsilon (G)\log_n N)$ for a general graph $G$. If we specifically examine the case when $G$ is a complete graph of order $n$, $\delta(G) = 1$ and $\varepsilon(G) = 1$, and the exact expression is $\delta\paren{G^{\vhp k}} = 2 \log_n(N) - 1$.

Next we calculate the maximum degree. Again, we proceed by recursion. Iterating the hierarchical product to some level $i$ can be viewed as attaching a copy of the graph $G^{\vhp \paren{i-1}}$ to every point in the graph $G$. Therefore, the degree of every root node in the $\paren{i-1}$-level subhierarchies increases by the degree of the corresponding node in graph $G$. The maximal increase achievable thus is the maximum degree $\Delta(G)$ of graph $G$. Since the root node for an $i$-level subhierarchy has $i$ distinct copies of $G$ attached to it, its degree is given by $i\cdot\text{deg}\paren{g_1}$, where $g_1$ is the root node of $G$. Then, the $i$-level maximum degree can be expressed as
\begin{align}
  \Delta(G^{\vhp i}) &= \max\curly{\paren{i-1}\text{deg}\paren{g_1} + \Delta\paren{G},\Delta(G^{\vhp \paren{i-1}})}\\
\ldots &= \max_{0\leq j\leq i-1}\left\{j\ \text{deg}\paren{g_1} + \Delta(G)\right\}\\
  &= (i-1)\text{deg}\paren{g_1} + \Delta\paren{G},
\end{align}
where the second step was obtained by recursion. For a general $G$, this gives the maximum degree scaling as $\Delta(G^{\vhp k}) = \Theta(\log_{n} N)$. For $K_n^{\vhp k}$, the root degree and the maximum degree of the base graph $K_n$ are both $n-1$, so $\Delta(K_n^{\vhp k}) = (n-1) \log_n N$. 

Now we consider the total edge weight of the hierarchy. We compute this by a recursion relation, first by duplicating the existing edge weight at $i-1$ levels by $n$ (the number of smaller hierarchies we must bring together) and then adding new edges. If the edges at level $i$ have weight $\alpha_i$, we can write this as:
\begin{equation}
	w(G^{\vhp i}) = n w(G^{\vhp \paren{i - 1}}) + \alpha_i w(G).
	\label{eqn:sieweight}
\end{equation}
By counting the number of subhierarchies with different weights, we find the following form for the total edge weight of the weighted hierarchy:
\begin{equation}
	w\left( G^{\vhp k} \right) = w(G) \sum_{i=1}^k \alpha_i \abs{G}^{k - i}.
\end{equation}
This can be verified by checking that it satisfies the recursion relation \cref{eqn:sieweight}. If we now specialize to the case where $G = K_n$ and $\alpha_i = \alpha^{i-1}$, we find
\begin{equation}
	w \left( K_n^{ \whp k} \right) = \frac{n ( n - 1)}{2} \sum_{i=1}^k \alpha^{i-1} n^{k - i}.
\end{equation}
This behavior can be broken into three regimes. For $\alpha = n$, the sum is constant, and the overall scaling is $\Theta(n N \log_n N)$. Otherwise, we can perform the geometric sum to obtain
\begin{equation}
	w \left( K_n^{\whp k} \right) = \frac{n (n-1)}{2} \frac{n^k - \alpha^k}{n - \alpha}.
\end{equation}
Here, the scaling will depend on the relative size of $n$ and $\alpha$. For $n > \alpha$, the first term in the numerator dominates, and $w\left( K_n^{\whp k} \right) = \Theta(n N)$. Otherwise, we can write $\alpha^k = N^{\log_n \alpha}$ and find  $w\left( K_n^{\whp k} \right) = \Theta(n N^{\log_n \alpha})$.

Finally, we calculate the weighted diameter of a $k$-level hierarchy $\delta_w(G^{\vhp k})$, just as for the unweighted diameter, by solving recursion relations for the quantities $\delta_{w}(G^{\vhp i})$ and $\varepsilon_{w}(G^{\vhp i})$, which are, respectively, the weighted diameter and weighted root eccentricity for an $i$-level weighted hierarchy. Here, note that the top level (at any intermediate stage $i$) is weighted by $\alpha_i$. Therefore, the recursion for the weighted diameter is modified to
\begin{equation}
\label{eqn:wdrecur}
  \delta_{w}(G^{\vhp i}) = 2\varepsilon_{w}(G^{\vhp \paren{i-1}}) + \alpha_i\delta_w(G).
\end{equation}
Similarly, the recursion for the weighted eccentricity becomes
\begin{equation}
  \varepsilon_{w}(G^{\vhp i}) = \varepsilon_{w}(G^{\vhp \paren{i-1}}) + \alpha_i\varepsilon_{w}(G),
\end{equation}
which has the solution $\varepsilon_{w}(G^{\vhp i}) = \varepsilon_{w}(G)\suml{j=1}{i}{\alpha_j}$. Finally, we have
\begin{equation}
    \delta_{w}(G^{\vhp k}) = 2\varepsilon_{w}(G) \suml{j=1}{k-1}{\alpha_j} + \delta_w(G) \alpha_k.
\end{equation}
For $G = K_n$ and $\alpha_i = \alpha^{i-1}$, this becomes:
\begin{align}
	\delta_w(K_n^{\whp k}) &=  2\sum_{i=1}^{k-1} \alpha^{i-1} + \alpha^{k-1}\\
  &= \frac{\alpha^{k} + \alpha^{k-1} -2}{\alpha - 1}.
  \label{eqn:dwhier}
\end{align}
Therefore, the scaling of the weighted diameter with $N$ has two regimes, depending on $\alpha$. For $\alpha < 1$ the geometric sum converges as $i \to \infty$ to $\frac{2}{1 - \alpha}$. This means that for $\alpha < 1$, a constant time suffices to traverse the entire hierarchy no matter how large it is. For $\alpha = 1$ the weighted diameter is equal to the (unweighted) diameter, which we have already computed. For $\alpha > 1$, $\delta_w$ scales as $\alpha^{k-1} = N^{\log_n \alpha}/\alpha \sim N^{\log_n \alpha}$. Note that the last scaling only applies if $\alpha$ does not scale with $n$. Since $n > 1$ and $\alpha >1$, this exponent $\log_n\alpha$ is always positive. Therefore, the total edge weight is asymptotically always either constant (for $\alpha < 1$) or growing (for $\alpha \geq 1$), as expected.

\subsubsection{Truncated Hierarchy, $G^{\tvhp k}$}
Finally, we look at how the results above are modified if we use the truncated hierarchical product discussed in Sec.~\ref{sec:flexible} [\cref{fig:graphs}(f)]. Although many of the calculations in terms of the number of levels $k$ are similar to those for the non-truncated hierarchy, it is no longer the case that $k = \log_n N$ exactly. In order to compare graphs fairly, we will need to recalculate the order of $G^{\tvhp k}$ so that results in this section can be written in terms of the total number of nodes, $N$. 

Under the node addressal scheme of Sec.~\ref{sec:addressal}, the nodes of a truncated hierarchy are in one-to-one correspondence with base-$n$ strings of length $k$ that only have trailing zeros. As before, a $0$ label points to a root node, but since root nodes do not bear subhierarchies due to truncation, all subsequent labels are forced to be $0$. In other words, we only label nodes using strings of the form $(l_1l_2\ldots l_i 00\ldots 0)$ for some $i\le k$, and $l_j\ne 0$ for all $j\le i$. The number of such strings with $i$ nonzero labels followed by $(k-i)$ zero labels is $(n-1)^i$. Therefore, the total number of nodes is
\begin{equation}
  N =  \sum_{i=0}^k (n-1)^i.
\end{equation}
Since $N = \Theta \left( (n-1)^k \right)$, many quantities of a truncated hierarchy with a base graph of order $n+1$ have the same scaling with the number of nodes $N$ as those for a non-truncated hierarchy with a base graph of order $n$.

In terms of the number of levels $k$, the maximum diameter will be proportional to $k$, just as it was in Sec.~\ref{sec:hierarchygraph}. It follows that the diameter scales with the total number of nodes as $\delta = \Theta\left(\log_{n-1} N\right)$ for a truncated hierarchy.

On the other hand, truncation offers a large improvement in the maximum degree of the hierarchy. As discussed in Sec.~\ref{sec:flexible}, the maximum degree of the truncated hierarchy is $\Delta(G^{\tvhp k}) = 2 \Delta(G)$, which is constant in $N$.

The edge weight recursion relation is simply $n-1$ copies of the current graph and then new, additional edges:
\begin{equation}
	w(G^{\tvhp i}) = (n - 1)  w(G^{\tvhp \paren{i - 1}}) + \alpha_i w(G).
\end{equation}
This is identical to the recursion relation for the standard hierarchy, Eq.~\eqref{eqn:sieweight}, except that there are now only $n-1$ copies, and also, for a given number of qubits $N$, the number of levels $k$ may be different by constant factors and terms. Thus, the only modification to the recursion relation is to replace $n$ with $n-1$, and the solution of the relation is otherwise identical. This means that none of the asymptotic scaling with $k$ is affected, and the scaling with $N$ is only affected by changing the total number of levels required to construct a graph of $N$ nodes.

The recursion relation for weighted diameter is similar to Eq.~\eqref{eqn:wdrecur}. Due to truncation, one needs to make a careful comparison of paths that do or do not terminate at the root node of the top level, but in any case the weighted diameter's scaling with $k$ is the same as the non-truncated weighted diameter's scaling. The weighted diameter scaling with $N$ can thus be found from \cref{eqn:dwhier}, using the appropriate value of $k$ for truncated hierarchies with $N$ nodes.

\subsection{Choosing Among Graphs}
\label{sec:choosing}
\subsubsection{Graph Embeddings}
The long list of comparisons summarized in Table~\ref{tbl:graphs} can make it difficult to see exactly when different graphs are preferable. To make our calculations more concrete, we would like to compare concrete scenarios for the connection of qubits arranged on a grid in $d$ dimensions. Specifically, in each dimension ($d = 1$, $2$, and $3$), we examine a hierarchy that is embedded into the grid, comparing its properties to the same grid but with nearest-neighbor connections. We consider building modules where each small module is a complete graph of size $n$, laid out in cubes on the grid so that the side-length of the cube is $n^{1/d}$. The $d = 1$ and $d = 2$ cases with $n = 2^{d}$ are illustrated in Fig.~\ref{fig:embeddings}.

\begin{figure}[htpb]
	\centering
	\includegraphics[width=.45\textwidth]{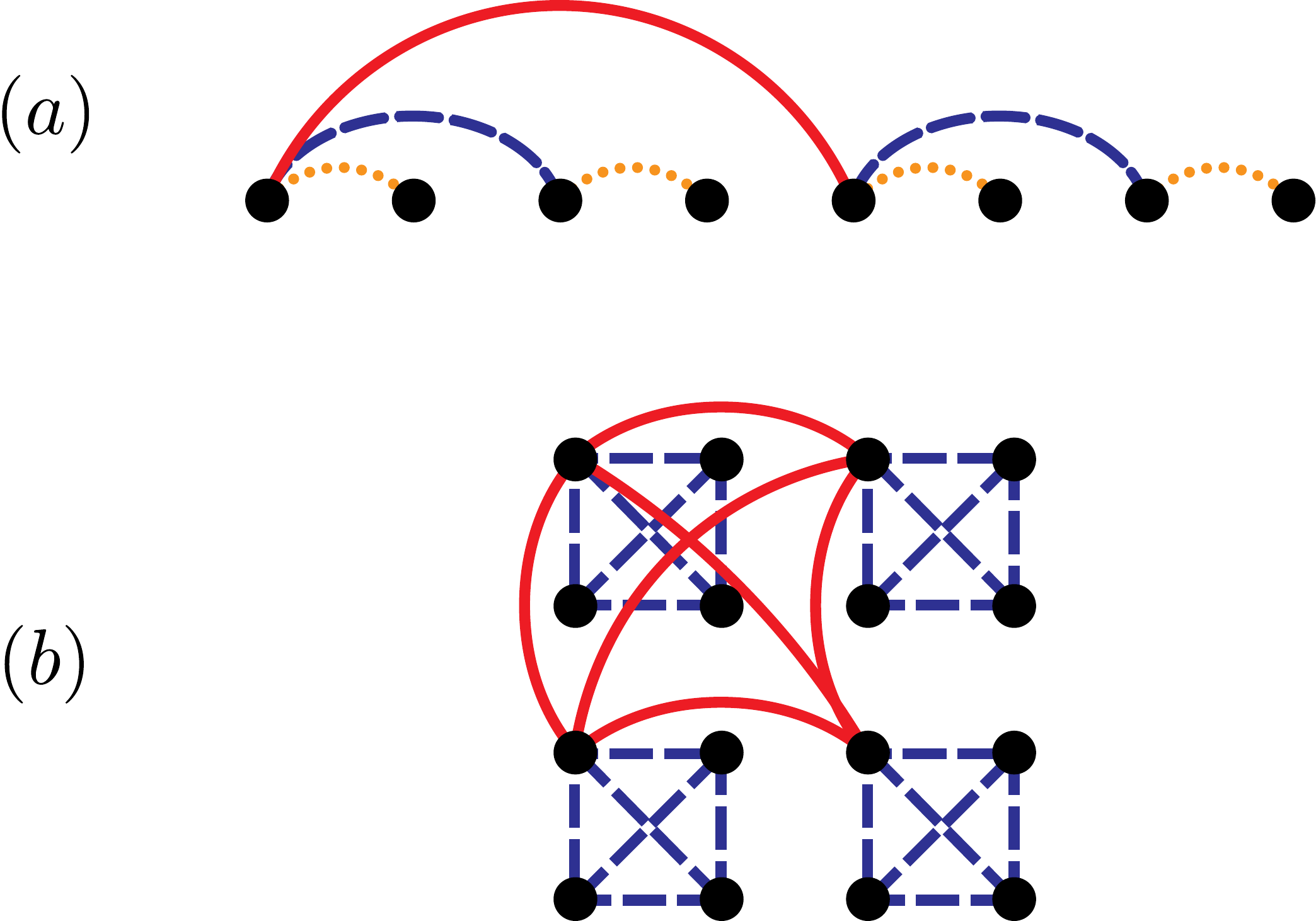}
	\caption{An illustration of the embedding of a hierarchy on a (a) one- or (b) two-dimensional lattice of qubits. In both cases, the length of an edge doubles at every level of the hierarchy, but the scaling in total edge length used changes from $\Theta(N \log_2 N)$ to $\Theta(N)$ when going from 1 to 2 dimensions. In $d = 3$, a similar hierarchy with doubling length scales connects modules of eight qubits.}
	\label{fig:embeddings}
\end{figure}

As shown, the length of an edge must increase by a factor of $n^{1/d}$ (2 in \cref{fig:embeddings}) at every level of the hierarchy in order to make these hierarchies possible. Therefore, to determine the total length of wire used, we can use a cost weight with $\alpha = n^{1/d}$. Keeping factors of $N$ only, Table~\ref{tbl:graphs} shows that for $d = 1$, we expect a total cost weight $\Theta \left(N \log_n N \right)$, while for the higher-dimensional cases we expect a total cost weight $\Theta \left(N\right)$ \footnote{If, for the application at hand, a planar graph is required, cycles such as $C_n$ can yield the same scaling.}. For the $d$-dimensional grid, this total cost weight is always $\Theta(N)$.

Now, to consider the performance of the two graphs, we must fix a separate scaling factor for the time weight, $\beta$. There are several options which might be reasonable for different physical applications. If $\beta = 1$, i.e., all links act identically in terms of time required to traverse them, then the weighted diameter of the hierarchy is simply $\Theta(\log_n N)$. Another option would be to take $\beta = \alpha$, i.e., to assume that links take as long to move through as they are long. In this case, we find that the hierarchy's weighted diameter scales as $\Theta \left(N^{1/d} \right)$, meaning that the hierarchy and nearest-neighbor graphs match in performance. 

We may also want to allow hierarchies to make use of the ``fat tree'' concept to produce a better-performing graph \cite{Leiserson1985}. Suppose that we allow ourselves to ``spend more'' on higher-level links, causing their cost weight to increase with a factor $\alpha$, but improving their performance so that the time weight scales with the factor $\beta = 1/\alpha$. In this case, the question is what range of $\alpha$ allows for the hierarchy to perform better than the nearest-neighbor grid (lower time-weighted diameter) for less cost (lower total edge cost weight)? (Note that this cost weight includes any contribution from ``lengthening'' wires at hire levels of the hierarchy.)

To answer the first, we compare the two asymptotic diameter scalings, $N^{\max ( 0, \log_n 1/\alpha)}$ and $N^{1/d}$. This suggests that if $\alpha \geq n^{-1/d}$, the hierarchy will allow for faster traversal than the nearest-neighbor grid. 
However, we wish to avoid causing the hierarchy to have a total cost weight that scales worse than $\Omega(N)$, which requires $\log_n \alpha < 1$. We find that a winning hierarchy can be constructed if $\alpha$ lies in the range $\alpha \in \left[n^{-1/d}, n\right)$. The optimal $\alpha$ is as large as possible but less than $n$; at that point an additional logarithmic factor is introduced to the total cost weight scaling.

In these cases, we have not allowed the nearest-neighbor grid to modify the weight (either kind) of its links. This is because any modification in its cost or time weight enters simply as a constant factor; if the individual links have weight $c$ instead of 1, the overall weighted diameter is just $c N^{1/d}$ while the total cost weight is just $c N$. Of course, one can apply different constants to each figure of merit, or apply $c$ to one and $1/c$ to the other. In order to make the nearest-neighbor grid match the performance of the hierarchy, the unit-length time weight would have to be $N^{\log_n (\alpha) - 1/d}$ while the unit-length cost weight must not scale with $N$. 

\subsubsection{Pareto Efficiency}

Our calculation of various graph parameters suggests that the hierarchy architecture offers significant advantages over others. One way to make this comparison more exact is to appeal to the economics concept of Pareto efficiency, which is used to designate an acceptable set of choices in multiparameter optimization \cite{PindyckRubinfeld}. A choice is Pareto efficient if switching to a different choice will cause at least one parameter to become worse. Suppose we eliminate all constants to focus only on the scaling with $N$ for three parameters: weighted diameter, maximum degree, and total edge weight. By removing these constants, we assume that the small multiplicative factors they provide will not influence decision making. For simplicity, we will assume that both cost and time weights scale with the same factor, $\alpha$.

\begin{figure}[h]
	\centering
	\includegraphics[width=.25\textwidth]{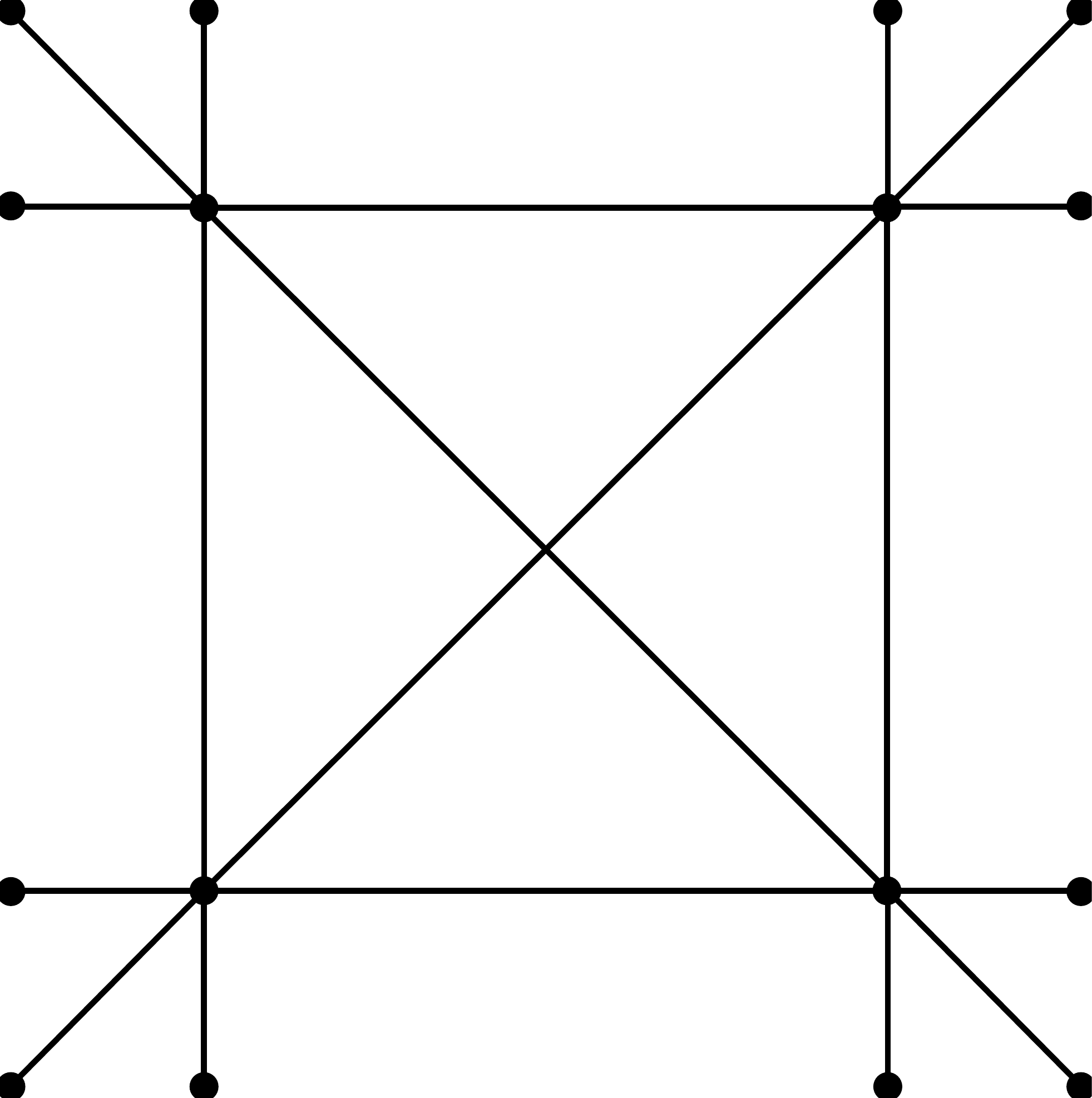}
	\caption{An example of a porcupine graph as defined in Ref.~\cite{Furedi1990}, in this case, $K_4 \uhp S_4$.}
	\label{fig:porcupine}
\end{figure}

For comparison, one could ask: what minimum number of edges is required for a graph on $N$ nodes to have maximum degree $\Delta$ and diameter $\delta$? Reference \cite{Furedi1990} answers this optimization question partially, and constructs what are known as \emph{porcupine graphs} which achieve the optimum, illustrated in Fig.~\ref{fig:porcupine}. We observe here that qualitatively, porcupines are modular, since they may be described by attaching trees to the nodes of a complete graph. In particular, the graph $K_{\sqrt{N}}\uhp S_{\sqrt{N}}$ is a porcupine graph that achieves a diameter $\delta=3$ and a maximum degree of $\Delta = 2(\sqrt{N}-1)$ with the minimal number of edges.

\begin{table}
	\begin{tabular}{@{} l @{} l | l l l}
		\multicolumn{2}{@{} l|}{$\phantom\star$ Graph} & $\delta_w$ & $\Delta$ & $w$ \\ \hline
		\multicolumn{2}{@{} l|}{$\phantom\star$ $K_N$} & const. & $N$ & $N^2$ \\
		\multicolumn{2}{@{} l|}{$\phantom\star$ $S_N$} & const. & $N$ & $N$ \\
		\multicolumn{2}{@{} l|}{$\phantom\star$ $C_N$} & $N$ & const. & $N$ \\
		\multicolumn{2}{@{} l|}{$\phantom\star$ Square grid} & $N^{1/d}$ & const. & $N$ \\
		\multicolumn{2}{@{} l|}{$\star$ $K_{\sqrt{N}} \uhp S_{\sqrt{N}}$} & const. & $\sqrt{N}$ & $N$ \\
		\multirow{2}{*}{$\star$ $K^{\twhp k}_{n+1}$ $\left\{\phantom{\begin{tabular}{@{\ }l@{}}\,\\\,\end{tabular}}\right.$} & $\alpha \neq 1$ & $N^{\log_n \alpha}$ & const. & $N$ \\
		& $\alpha = 1$ & $ \log_n N$ & const. & $ N$
	\end{tabular}
	\caption{An illustration of the scaling with $N$ of three key parameters to be used in Pareto optimization. Here $\delta_w$ is the weighted diameter, $\Delta$ is the maximum degree, and $w$ is the total edge weight of the graph. A star ($\star$) has been placed next to the two graphs we find to be Pareto efficient. We have also included the $\alpha = 1$ (unweighted) hierarchy in the final row, as it has a different scaling for the weighted diameter. Our Pareto efficiency judgment is made assuming $n^{1/d} \geq \alpha \geq 1$.}
	\label{tbl:pareto}
\end{table}

We summarize the scalings of these graphs in Table~\ref{tbl:pareto}. Assume that $n^{1/d} \geq \alpha \geq 1$. In this case, we can find the Pareto-efficient solutions by noting which options can be eliminated. We see that $K_N$ is strictly worse than $S_N$ and can be eliminated; $S_N$ is then dominated by the porcupine. $C_N$ is dominated by the square grid, which has identical scaling of total weight and degree but lower diameter. The square grid, in turn, is dominated by the hierarchy due to the assumptions we have made on $\alpha$. This means that the two Pareto-efficient choices in this case are the truncated hierarchy and the porcupine graph. If we chose any option besides these two, we could improve the scaling with respect to $N$ without any trade-off by switching to one of them. While this framework does not offer a decision rule to choose between the porcupine and $K_n^{\twhp k}$, the latter is clearly preferable if our aim is to create a modular quantum system that does not rely on a few centralized nodes. We stress that this optimization procedure is only intended to evaluate the quantities and graphs introduced, and the Pareto-efficient choices will change if other figures of merit or other graphs are included in the optimization.

\subsubsection{Optimality of diameter for hierarchical graphs}
The use of $K_n^{\twhp k}$ may be further motivated via the degree-diameter problem \cite{Hoffman1960} (for a survey, see Ref.\ \cite{Miller2005}). Given a graph with a maximum allowed degree $\Delta$ on each node and diameter no greater than $\delta$, the degree-diameter problem asks for the maximum number of nodes $N(\Delta,\delta)$ that such a network could hold. This problem is practically well-motivated in the design of networks, and may be answered for special classes of graphs. 
The Moore bound, which is a bound for general graphs, states that the number of nodes $N$ is at most $\frac{\Delta (\Delta -1)^\delta - 2}{\Delta - 2}$.
This means that for a constant maximum degree $\Delta \geq 3$, the diameter satisfies $\delta = \Omega(\log N)$, meaning that hierarchical graphs have optimal diameter up to a constant factor.
Tighter bounds on the number of nodes may be shown, for instance, when the \emph{tree-width} of the graph is bounded.
Ref. \cite{Pineda-Villavicencio2013} shows that graphs with small tree-widths $t$ and an odd diameter $\delta$ satisfy
\begin{equation}
  \label{eq:nodeCap}
  N\paren{\Delta,\delta;t} \sim t\paren{\Delta - 1}^{\frac{\delta-1}{2}}.
\end{equation}
As discussed towards the end of Sec.\ \ref{sec:struc}, hierarchies have low tree-widths. In particular, the tree-width  of the truncated hierarchy $K_n^{\twhp k}$ is at most $n-1$.
Next, the diameter of the truncated hierarchy $K_n^{\twhp k}$ is $\delta(k) = 2k-1$ (which is odd), and the maximum degree is $\Delta(k) = 2(n-1)$. Comparing the number of nodes in this hierarchy $N(k)$ to the node capacity $N\paren{\Delta(k),\delta(k);n-1}$ as in Eq.\ (\ref{eq:nodeCap}), we get
\begin{equation}
\frac{N(k)}{N\paren{\Delta(k),\delta(k);n-1}} \gtrsim \frac{n^k}{(n-1)\paren{2n - 3}^{k-1}} \label{eq:ratio}.
\end{equation}

Keeping the total number of nodes $N$ fixed, consider two limits: one, a shallow hierarchy in which the number of levels $k$ is $O(1)$, and two, a deep hierarchy, in which the size $n$ of the base graph is $O(1)$ [i.e., $k = O(\log N)$].
We see that when the hierarchy is shallow, the right side of Eq.\ (\ref{eq:ratio}) is $\Theta(1)$, which indicates optimality.
For a deep hierarchy, the above ratio scales as $2^{-\log_n N}= N^{\frac{-1}{\log(n)}}$, which is polynomially suboptimal. However, when $n=3$, the ratio in Eq.\ (\ref{eq:ratio}) is again $\Theta(1)$, and the truncated hierarchy $K_3^{\twhp k}$ is degree-diameter optimal in this case.

\section{Entangled State Construction} 
\label{sec:construction}
\subsection{Setup}
Although some of the graph properties calculated in the previous section give a heuristic sense for the capabilities of the hierarchical graph versus the nearest-neighbor or all-to-all graphs, we would like to examine their performance directly in terms of a quantum information processing task. The task we have chosen as a benchmark is the creation of a many-qubit GHZ state. Since this entangled state is difficult to create across long distances when using nearest-neighbor interactions, we hope that it can serve as a useful yet basic benchmark for processing quantum information with unitary evolution \cite{Eldredge2017}. As shown in Ref.~\cite{Eldredge2017}, preparation of a GHZ state also provides a means of transferring a state across the graph.  Thus, the results of this section also bound state transfer time.
However, in this work, unlike Ref.~\cite{Eldredge2017}, we focus on the use of discrete unitary operations (gates) rather than Hamiltonian interactions. This means that we cannot take advantage of the many-body interference which provided a speed-up in Ref.~\cite{Eldredge2017}.

We adopt a framework in which every vertex of the graph represents one logical qubit, while an edge of the graph represents the ability to perform a two-qubit gate between nodes. 
For the purposes of this work, we assume that we can ignore single-qubit operations, instead focusing on the cost imposed by the required two-qubit gates between nodes.

\subsection{Analytical Results for Deterministic Entanglement Generation}
\label{sec:deterministic}
In order to create the GHZ state, we assume that we begin with all qubits in the state $\ket{0}$ except for one qubit that we place in the initial state $\ket{+}$. By performing controlled-NOT operations between this qubit and its neighbors, a GHZ state of those qubits is created. The state can be expanded by continuing to use further CNOT operations to expand the ``bubble'' of nodes contained in the GHZ state until it eventually spans the entire graph. For state transfer, we instead assume the initial state $\ket{\psi}$ to be transferred sits on one qubit, which is then transferred through the graph using SWAP operations until it reaches its destination.

We first consider a graph which has been assigned time weights, so that a gate between two linked edges can be performed deterministically in a time given by the weight of the edge between them. We assume that one node can act as the control qubit for several CNOT operations at once. Therefore, according to our protocol above, the time $t_\mathrm{GHZ}$ required to construct the GHZ state is found by identifying the qubit that will take the longest to reach from the initial qubit by hopping on the graph. A similar argument holds for the state transfer time.

This implies that a GHZ state can be created, or a state transferred, in time that scales like the (time-)weighted eccentricity of the node we choose as the initial $\ket{0} + \ket{1}$ state. However, if we take the further step in our analysis of maximizing over weighted eccentricities (identifying the worst-case starting node), then the time will simply be the weighted diameter of the graph as calculated in the previous section. Note that the difference between the best-case weighted eccentricity (the weighted graph radius) and the worst-case weighted eccentricity (the weighted graph diameter) over all nodes is at most a factor of two -- if we look at the midpoint of the path that realizes the graph diameter, its distance to the endpoints of the path is bounded by the radius -- so from the perspective of how this time scales asymptotically with $N$, the two are interchangeable.

\subsection{Numerical Results for Probabilistic Entanglement Generation}
\label{sec:probabilistic}
As shown in the previous subsection, in a deterministic setting of entanglement generation where a gate between two nodes of our graph $H$ can be performed in fixed time, the time required to create a GHZ state is equal to the weighted diameter $\delta_T(H)$. However, in many situations in long-distance quantum information processing, probabilistic or heralded methods might be used instead. We might suppose that, in a small time step, the network succeeds  in performing a desired two-qubit gate with probability $p$ (and that we know whether the gate succeeded or not). Upon failure, one can try performing the gate again in the next time step without having to rebuild the state from the beginning. In this setting, we expect that the scaling will likely be similar to the deterministic case but more difficult to calculate exactly. Fortunately, it is easy to re-interpret the meanings of the edge weights to account for this. 

The main complication arising from the inclusion of unitaries that do not get completed in a fixed amount of time is that multiple paths between two nodes can all contribute to the total probability that entanglement has been produced, making it a harder problem to solve exactly. However, we can turn to numerical simulation to get an idea of the behavior. In the following, we define a new edge weight called the probability weight, $p_{ij}$, which is the probability of success of edge $(i,j)$ in one time step.

The algorithm for simulating the creation of a GHZ state is as follows:
\begin{itemize}
	\item At each time step $t$, identify the subgraph $F$ of nodes that have already joined the GHZ state.
	\item For each edge between a GHZ node $i \in F$ and a non-GHZ node $j \notin F$, identify the probability edge weight $p_{ij}$. With probability $p_{ij}$, allow node $j$ to join the GHZ state in the current time step, $t$.
	\item Once all edges have been tested, repeat the procedure for the next time step on the new, possibly larger, set of GHZ nodes.
\end{itemize}

A single number $p_0$ is chosen as the base probability, so that the probability weights on the lowest level are $p_0$, and edges on the $i$-th level of the hierarchy succeed with probability $p_0 \alpha^{i-1}$. Note that we must fix $\alpha < 1$.  
As a first step toward evaluating the performance of a graph, we estimate its time weights as $w_{ij} = 1/p_{ij}$, the time required to perform a two-qubit unitary on average. The overall estimate of the expected time taken is then $\delta_T/p_0$, where $\delta_T$ is the time taken for the deterministic case with time weights scaling by a factor $\beta = 1/\alpha$ at each level.
We find that this predicts very well the rate at which the GHZ state can be constructed over a wide range of $\alpha$ values (Fig.~\ref{fig:numdata}). The expected time remains $\Theta \left(N^{\log_n (1/\alpha)} \right)$. 

For graphs with multiple potential paths between two nodes, such as a two-dimensional grid, the expected time is not simply the deterministic time scaled by the extra time factor the probabilistic setup requires in each step. 
We can however still bound the expected time to build the GHZ state $\mathbb{E} [t_\mathrm{GHZ} ]$ above and below for a graph $H$. We will bound it above by considering a modified graph in which the only path between the initial qubit and the qubit farthest from the starting point has distance $d_w (H)$. Such a path completes in time $d_w(H)/p_0$ on average. Since $H$ has strictly more paths than this, the expected time will be lower. However, the shortest path between the initial and final qubits has total distance $d_w (H)$, which would take time $d_w(H)$ to complete even if $p_0 = 1$ and all gates were deterministic. Therefore, no path can finish faster than this, and the expected outcome over all possible paths cannot improve over $d_w(H)$. We can therefore write the following restriction on the expected time:
\begin{equation}
  d_w(H) \leq \mathbb{E}\left[t_\mathrm{GHZ} \right] \leq \frac{d_w(H) }{p_0},
\end{equation}
where $\mathbb{E}[\cdot]$ denotes the expected value. This implies $ \mathbb{E}\left[t_\mathrm{GHZ}\right] = \Theta \left( d_w(H) \right)$.
Therefore, although the prefactor is difficult to calculate, we can tell that the time required to complete the creation of a GHZ state on the nearest-neighbor graph with $d=2$ is $\Theta(\sqrt{N})$. This scaling implies that the condition for the hierarchy to outperform the nearest-neighbor grid in 2D is $\alpha \geq n^{-1/2}$, which is reflected in Fig.~\ref{fig:numdata}.

\begin{figure}[htbp]
	\centering
	\includegraphics[width = .45\textwidth]{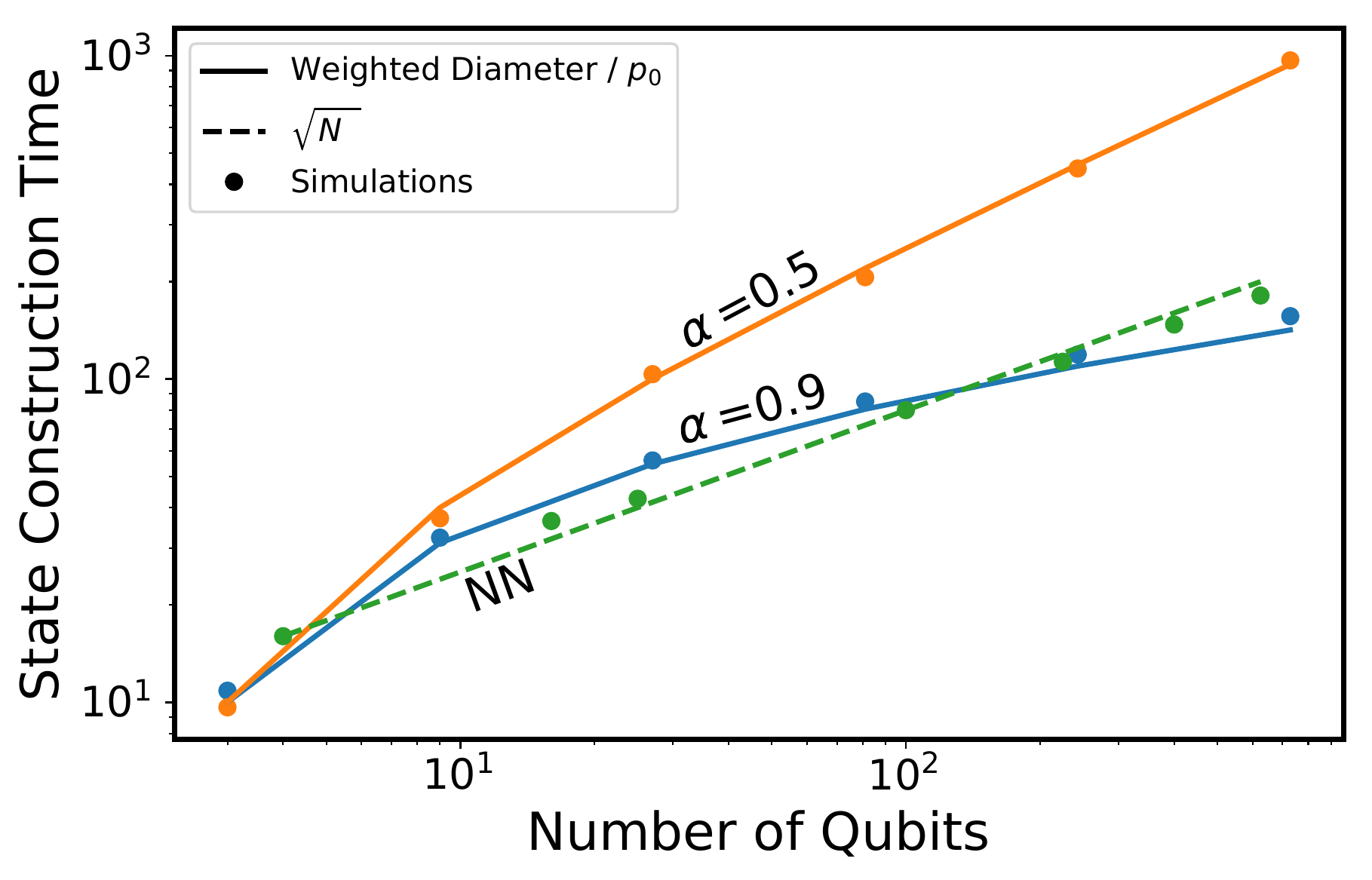}
	\caption{Graph-theoretic predictions and simulation of $t_\mathrm{GHZ}$ for the hierarchy $K_3^{\whp k}$ at various $\alpha$, and a two-dimensional nearest-neighbor (NN) grid; $p_0 = 0.1$. The $\sqrt{N}$ fit shows the scaling of $t_\mathrm{GHZ}$ for the nearest-neighbor case, with a prefactor in the range suggested by the text's argument. Note that since $n = 3$, the crossover for the hierarchy to asymptotically outperform the nearest-neighbor grid is at $\alpha \geq 1/\sqrt{3} \approx 0.58$, which is seen in the numerical results. Code for generating this figure can be found at \cite{ourgithub}.}
	\label{fig:numdata}
\end{figure}

Using the GHZ-creation time and state transfer as examples, we can see many of the advantages of hierarchical graphs as network topologies. Such architectures are able to rapidly incorporate a very large number of qubits (exponential in the number of hierarchy levels), while the time-weighted diameter (and thus communication time) grows linearly with the number of levels. Since the weighted diameter is not substantially changed even if we use the truncated hierarchical product of Sec.~\ref{sec:flexible}, these benefits can also be realized in that setup. 
\section{Circuit Placement on Hierarchies}
\label{sec:placement}
A final reason we believe hierarchies could be a useful way to organize modular quantum systems is that they may be able to take advantage of straightforward methods for circuit placement. Circuit placement is a problem that arises when a quantum circuit or algorithm must be translated onto a physical system \cite{Maslov2008}. Suppose we are given a specification for a quantum algorithm in the form of a circuit diagram, and we wish to run that algorithm on a given quantum computer (which presumably has enough quantum memory to perform that algorithm). In order to translate the circuit into instructions for our machine, we must identify each algorithm qubit with a machine qubit and then determine how the individual quantum gates can be realized in our machine \footnote{We studiously avoid referring to the machine qubits as ``physical'' in this paper, as we do not want to confuse this conceptual distinction with the physical/logical qubit divide in error correction. All of the qubits referenced in this section are logical qubits in the error-correcting sense.}.

Circuit placement is an important part of the quantum software stack, just as the compilation to machine code is in classical computers. By placing qubits which must operate on each other often close together in the real-world machine, we can minimize the amount of time spent performing long-range quantum gates. However, this problem is generally quite difficult for arbitrary instances and in fact has been shown to be NP-complete \cite{Maslov2008}. 

However, since we are interested in the sub-problem of circuit placement on hierarchies, it is possible that the hardness results of Ref.~\cite{Maslov2008} do not apply and the exact solution can be found in polynomial time, just as the problem can be solved tractably in linear qubit chains \cite{Pedram2016}. Whether or not an exact algorithm exists, we can appeal to heuristics to efficiently place circuits as well as possible. Such an approach is promising because hierarchies are extremely structured with clear prioritization of clustering between small groups of qubits, which can be recognized in the algorithm and matched to the physical architecture.

To explain further, we consider the following model. We suppose that we begin with a weighted circuit graph $C$ with a vertex set $V_C$ and an edge set $E_C$, in which an edge exists between two vertices if there is at least one two-qubit gate between them in the circuit, with the weight of the edge corresponding to the number of gates. We then specify a machine graph, $M$, with vertex set $V_M$ and edge set $E_M$, in which each edge $(u,v)$ indicates that the machine can perform two-qubit gates between $u$ and $v$.

We now seek a mapping $f: V_C \to V_M$ that assigns algorithm qubits to machine qubits. A mapping $f$ has a total cost found by considering, for every edge in $E_C$ between vertices $c_i$ and $c_j$, the shortest-path distance between $f(c_i)$ and $f(c_j)$ in $M$, multiplying that distance by the weight of the edge in $C$ and summing over all edges. Thus, it captures the total distance that must be traversed by all gates in order to execute the circuit when the current mapping is used. Reducing this is expected to reduce the amount of time spent performing SWAP gates in order to connect two distant qubits. Performing this mapping is an important subroutine in any quantum programming framework, and at least one existing quantum compiler has a ``mapper'' phase that takes into account the actual graph that a program must be compiled onto \cite{Haner2016, Steiger2018}. 

Our cost function is a choice made from convenience, and others are possible. Using this cost function ignores several important aspects of quantum circuits. First, our cost function does not account for the fact that a different mapping might allow for more parallelism, since it evaluates the cost of each gate individually. In addition, we take the circuit graph $C$ as a given, when in fact many different circuits exist for any given quantum operation. In fact, it is likely that optimization of $C$ could be performed, possibly by using the structure of $M$ itself. A more realistic model for circuit placement may require a back-and-forth in which a circuit is first placed, then optimized, then re-placed, and so forth.  A more advanced placement algorithm may even permit the swapping of qubits throughout the circuit, thus optimizing the placement of the quantum algorithm without constructing a circuit connectivity graph as an intermediate step.

For this paper, we will ignore these concerns and proceed with a heuristic approach to circuit placement for hierarchies. We describe our algorithm as ``partition and rotate,'' as it requires these two basic subroutines. First, qubits are partitioned into sub-hierarchies by examining whether they are connected by many gates in $C$. This process continues recursively, with each partition being subdivided and so on until every qubit is identified with its point in the hierarchy. This top-down process is then followed by a bottom-up process in which each small cluster is rotated so that its most-communicative qubit is at the root of the sub-hierarchy, and then the partitions themselves are rotated, and then clusters of clusters, etc. Ideally, this results in a mapping in which every qubit is (a) placed close to qubits it needs to communicate with and (b) placed in easy access to other modules if that qubit requires such access. We will now explore in detail these subroutines and the circuit speed-ups that result. We will place algorithms on a machine graph $M$ which we take to be defined by $K_n^{\uhp k}$ for some integer $k$. Note that we examine unweighted hierarchies, but these methods can be applied to weighted hierachies as well.

\begin{figure*}[tb]
	\centering
	\includegraphics[width=.8\textwidth]{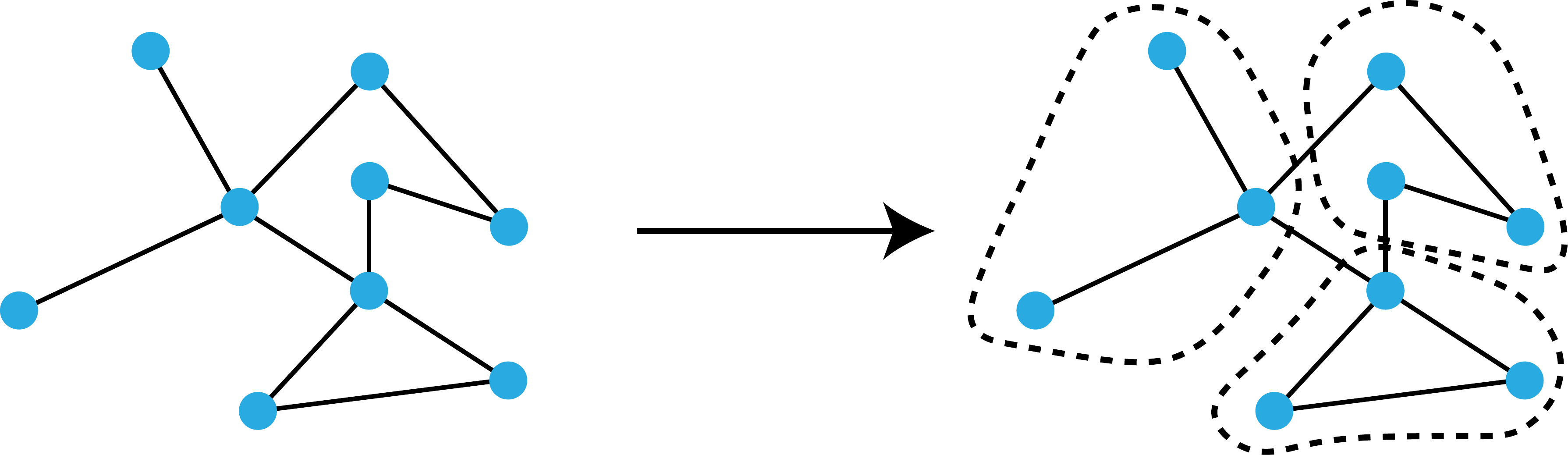}
	\caption{Illustration of how we might divide a hypothetical graph into smaller clusters. This process is repeated many times, recursively.}
	\label{fig:cluster}
\end{figure*}
\subsection{Partitioning}
For the first step of our algorithm, we wish to divide the computational graph $C$ into $n$ subgraphs which are as disconnected as possible. In addition, since we wish to assign each node in $C$ to physically separate and limited qubit registers, it is important that each of the subsets has precisely $\abs{C}/n$ nodes. This problem is known as \textit{balanced graph partitioning}, and the problem of finding the optimal solution is NP-complete for $n \geq 3$ \cite{Andreev2006}. However, heuristic methods exist which approximate the solution, and are widely used in the field of parallel computing and circuit design \cite{Karypis1998}. We have illustrated this process in \cref{fig:cluster}.

Our method for performing circuit placement on hierarchies relies on a subroutine that performs balanced graph partitioning.  There are many algorithms and software packages from which to choose.  Here, we have used a software package called Metis, which implements an algorithm called recursive bipartitioning \cite{Karypis1998}.

We begin by supposing that we have the circuit graph $C$ and we wish to identify groups of $\abs{C}/n$ nodes which have high connection to each other but low connection outside of the group. This is accomplished by finding a balanced graph partition in which the weight of the edges connecting each group is minimized. If we call the initial set of all nodes $S$, then we wish to identify subsets $S_0, S_1, \dots, S_n$. In terms of the addressal scheme of Sec.~\ref{sec:addressal}, all the nodes in set $S_i$ will have have digit $i$ in their base-$n$ representation. In the next section, we will discuss the choice of which digit to assign to each set.

Once the subsets $S_i$ are found, partitioning can be run again on that relevant subgraph, creating $n$ new subsets of this subset. Eventually, every node in the graph will be identified with a lowest-level module of size $n$, a  next-level module of size $n^2$, and so forth.

Here we have used a generalized, pre-existing algorithm for graph partitioning. It is possible that the specifics of this problem, and the possibility of co-designing the precise quantum circuit implementing the algorithm (and thus $C$) with the architecture, enable more specific, better-performing approaches. 
\begin{figure*}[tb]
	\centering
	\includegraphics[width=.8\textwidth]{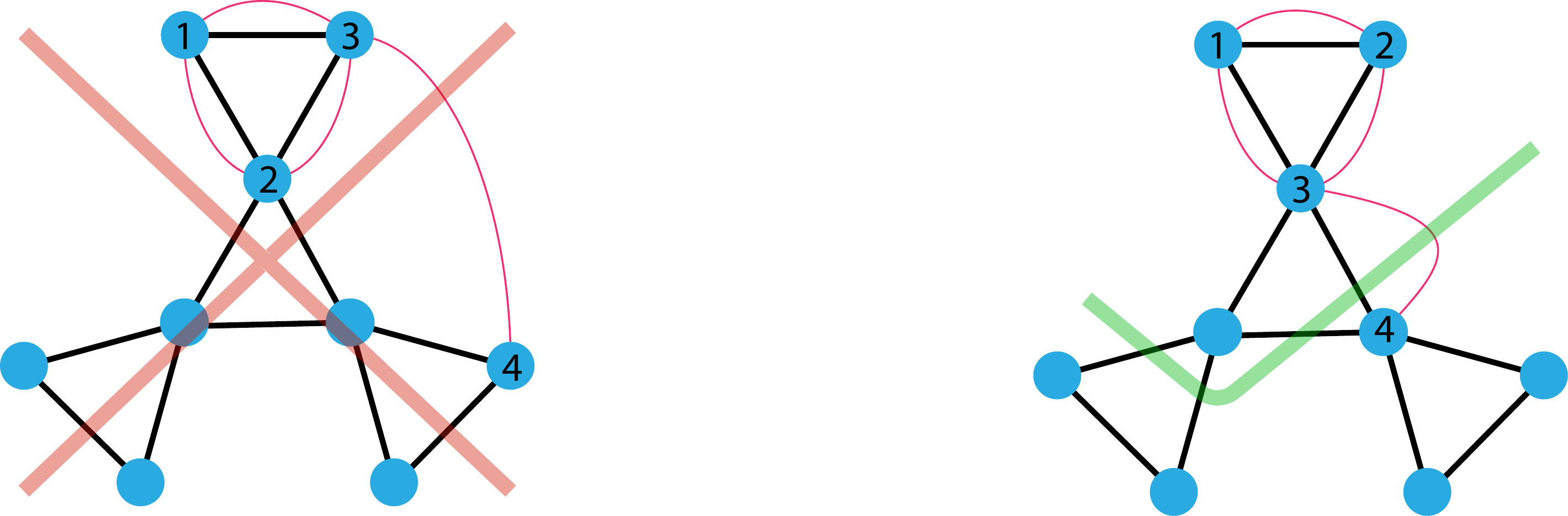}
	\caption{An illustration of how and why the process of rotation works in our circuit placement algorithm. In this diagram, red links represent gates to be performed (edges in $C$) and black ones are available communicative links (edges in $M$). In the graph $C$, the qubits 1, 2, and 3 are all connected, and 3 is connected with 4. These qubits have been correctly placed into clusters $(1, 2, 3)$ and $(4)$. However, if they are not rotated correctly (see left), the link between 3 and 4 can become quite long, necessitating a long-range quantum gate. By properly rotating (right), the gate between links 3 and 4 becomes much shorter, improving the placement.}
	\label{fig:rotation}
\end{figure*}
\subsection{Rotation}
Drawing partitions between qubits is not enough to fully specify their placement into a hierarchy. If we consider using the $i$-digit representation, we can imagine that partitioning essentially describes the process of deciding, from a set of qubits, which ones will share a digit in the next level. However, these digits are more than arbitrary markers, because there is one node in any sub-hierarchy which connects to the hierarchy above. This node (which we say has digit 0) has privileged access to communication with other sub-hierarchies. Therefore, in order for our circuit placement to succeed, we should ensure that the qubit on top of each sub-hierarchy is the one which requires the most access.

In order to do this, we implement a second subroutine, the ``rotate'' part of the algorithm. This is called rotation because, once we know which qubits will be together in a module, we must choose how to orient them relative to the larger modular structure. Whereas partitioning is top-down (the full graph is broken into small subgraphs which are then themselves partitioned), rotation is bottom-up. Suppose the modular structure is $K_n^{\uhp k}$. We begin with sets of $n$ qubits and must choose which will be the top of each smallest instance of $K_n$. We then take each partition of $n$ instances of $K_n$ and decide which instance of $K_n$ will connect to the next level up, and so on. This process is illustrated in Fig.~\ref{fig:rotation}. 

Note that the general structure of our algorithm is to first go down the hierarchy, partitioning nodes, and then to go up, re-arranging sub-hierarchies in the proper order. We perform this procedure only once to obtain our circuit mapping.

\subsection{Results}
Now that the placement algorithm is specified, we turn toward examining its performance on quantum circuits. We consider two separate questions. First, we investigate whether the algorithm is effective -- does it actually reduce, relative to a random assignment, the amount of distance that must be traversed in a circuit to execute all the requested gates? Second, we will examine whether the algorithm executes efficiently on a classical computer. This second point is important because in general the problem can be solved by brute-force search, but such a search requires a time $\mathcal{O}\left(N!\right)$ to perform (although, as we stated earlier, it is possible that an exact algorithm exists with a lower time cost for the special case of hierarchies).

\begin{figure}[htpb]
	\centering
	\includegraphics[width=.45\textwidth]{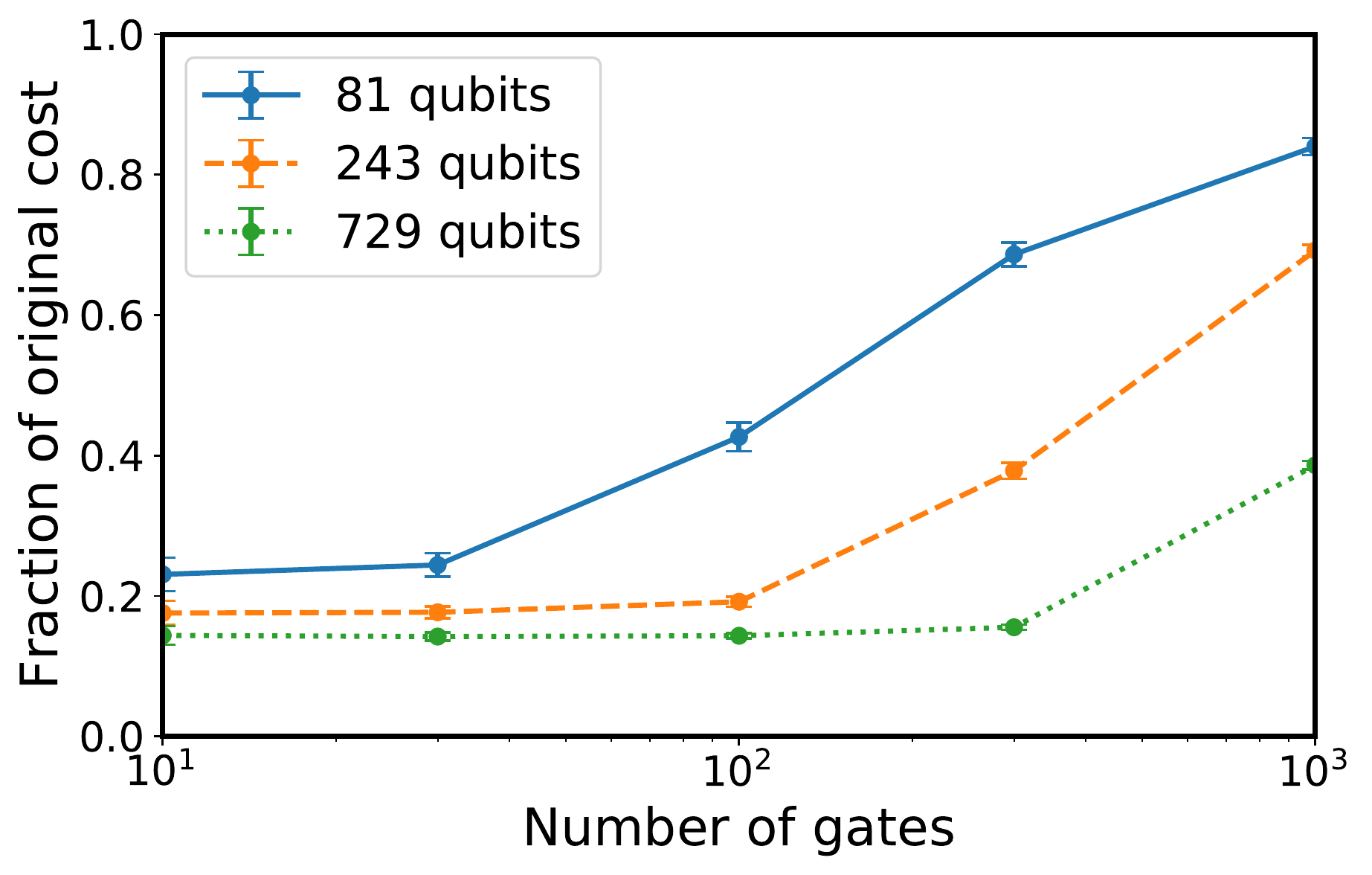}
	\caption{Plot of the average ratio (total gate distance after partition-and-rotate)/(total gate distance before) given 100 trials each for different numbers of random gates and random qubits. Error bars represent one standard deviation. As the number of gates begins to saturate the number of qubits, the possible improvement from optimization begins to decrease.}
	\label{fig:crperformance}
\end{figure}

To investigate the above concerns, we examine the algorithm's performance on random circuits. For each trial, we first generate a random circuit of $N_g$ two-qubit gates on $N$ total qubits.  The precise type of two-qubit gate is irrelevant in this framework. Likewise, single-qubit gates require no communication overhead, so we do not consider them.  The random circuit then implies a computational graph $C$, where, as described above, the vertices represent the algorithm qubits and the edge weights represent the number of gates that must be applied between each pair of qubits.  Once this computational graph has been generated, we first attempt to map it blindly to the hierarchy graph, using the addressing scheme of Sec.~\ref{sec:addressal} and an arbitrary order of the graph $C$. Then, we apply partition-and-rotate and calculate the new cost function. By comparing the cost function between these two, we develop an idea of how much long-range quantum information processing is eliminated by partition-and-rotate. We perform this several times to build up statistics on average time costs and average improvement. Code which performs circuit placement and generates the profiling figures included in this section can be found at \cite{ourgithub}.

In our simulations, we test hierarchies $K_3^{\uhp k}$ up to 729 qubits ($k=6$). We find that as gates are added, the improvement over the initial cost is decreased. This is sensible, because as more randomly placed gates are present, different node mappings become more similar. Such an effect will likely not be present for quantum algorithms which do not have their gates placed randomly. For cases in which the number of gates is significantly fewer than the number of qubits, partition-and-rotate is able to significantly reduce the cost function. We find that 100 gates can be placed on a 729 qubit hierarchy with a total cost less than 20\% of the original on average. When 1000 gates are placed on a 729 qubit hierarchy, the final cost is still only 40\% of the initial one. Results for $K_3^{\uhp 4}$, $K_3^{\uhp 5}$, and $K_3^{\uhp 6}$ can be seen in Fig.~\ref{fig:crperformance}.

\begin{figure}[b]
		\includegraphics[width=.45\textwidth]{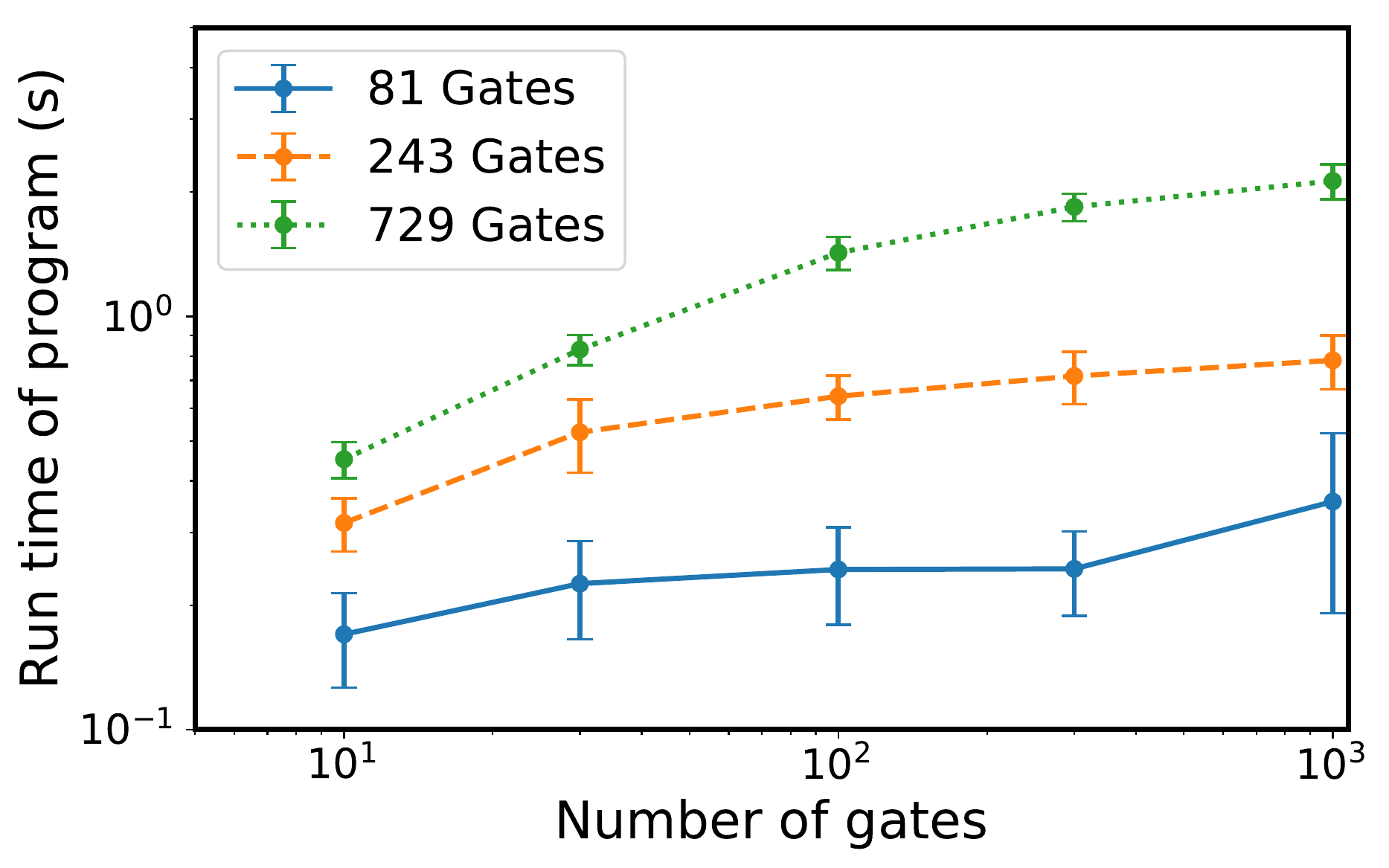}
		\caption{Average run times over 100 trials for partition-and-rotate on a 2015 MacBook Pro with a 2.6 GHz processor. Each line represents an increasing number of gates for a constant circuit size as measured by the number of qubits.}
		\label{fig:gate_time_scaling}
\end{figure}
\begin{figure}[htpb]
		\includegraphics[width=.45\textwidth]{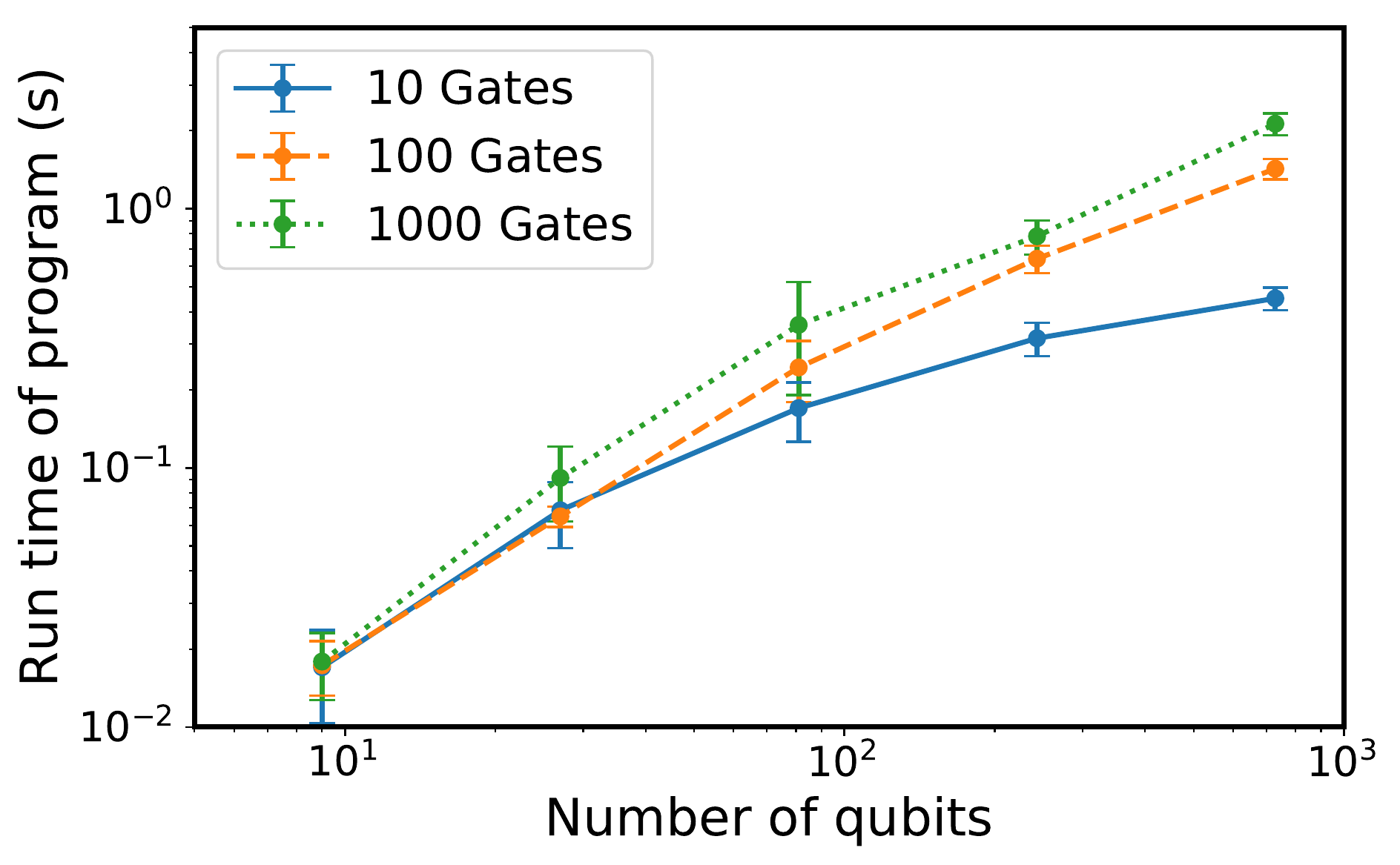}
		\caption{Average run times over 100 trials for partition-and-rotate on a 2015 MacBook Pro with a 2.6 GHz processor. Each line represents an increasing number of qubits for a constant number of gates.}
		\label{fig:qubit_time_scaling} 
\label{fig:times}
\end{figure}

Next, we examine the time required to place such a circuit. Our code, most of which is written in Python3 but which uses a C implementation of Metis for graph partitioning, can place 1000 gates on a 729-qubit hierarchy in roughly two seconds when running on a 2015 MacBook Pro. Although the algorithm seems naturally suited to parallelization, our implementation uses only a single core. Our current implementation appears to scale with the number of qubits as $\mathcal{O}(N)$ and not to depend on the number of gates included at all once there are a sizable number of gates. We illustrate these two relationships in Figs.~\ref{fig:gate_time_scaling} and \ref{fig:qubit_time_scaling}. These times compare favorably to the times reported in Ref.~\cite{Maslov2008}, with much optimization still possible in our implementation.

Note that using random graphs as described above means that our results may not be valid for more general quantum algorithms. It is possible that practical quantum algorithms have structure that makes them either particularly amenable or particularly difficult for partition-and-rotate algorithms to place, depending on the actual algorithm being examined.

\section{Conclusions and Outlook}
 
In this paper, we have developed the theory of hierarchies using the existing binary operation of the graph hierarchical product. We have shown that hierarchies may be a promising architecture for large quantum information processing systems. To demonstrate this, we analyzed both properties of the underlying graph (such as diameter, maximum degree, total edge weight) as well as the time it would require to perform a representative quantum information process (constructing the GHZ state/state transfer) in both deterministic and probabilistic settings. We have also computed and tabulated these properties for many other graphs which appear as potential architectures, for comparison. We have shown that, for much of parameter space, hierarchies have favorable scalings in cost and performance with the total number of qubits $N$ compared to these competitors. Also, since hierarchical graphs are hyperbolic, they share many of the advantages of hyperbolic graphs such as efficient routing schemes \cite{Kleinberg2007}, network security \cite{Jonckheere2004}, and node addressal \cite{Chepoi2012}. 

We have also presented a conceptually simple circuit placement algorithm which allows for simple optimization using existing graph-partitioning software packages. Our partition-and-rotate algorithm scales well with the number of qubits and gates in the circuit and reliably reduces the total distance that needs to be traversed by random quantum circuits, which we verified by simulation.

One significant limitation of our analysis in this paper has been that we remained confined to unitary operations. Non-unitary operations (for instance, measurements which are then fed forward to choose future unitary operations) are capable of establishing long-range correlations like those present in the GHZ state much more quickly than unitary ones if measurements and classical communication are fast. In the future, we hope to extend our results into non-unitary domains \cite{UsUpcoming}.
 
In addition, we have made the assumption that the primary way in which quantum architectures will differ is the speed with which two qubits can communicate (as represented by our time weights on edges). Another important case might be one in which the primary way edges are enhanced is by improving bandwidth or duplicating nodes to provide parallel routes rather than affecting gate speed directly. For some schemes, our abstract notion equating the time of a two-qubit gate with the edge weight may still be a useful tool of analysis, but in other cases bandwidth and speed may not be interchangeable. We intend to undertake the analysis appropriate for this case in a future manuscript \cite{UsUpcoming}.

In this paper, we limited ourselves to consideration of a few quantum processes (generation of a large entangled state, or transfer of a state across the graph), which might not be representative of other, more general distributed quantum information tasks.
Some algorithms, such as Shor's algorithm, are known to be able to run with little additional overhead even on one-dimensional, nearest-neighbor graphs \cite{Fowler2004}. Therefore, when selecting an architecture for a practical quantum computer, care will need to taken to select the proper benchmarking task.

In future work, we hope to look at a wider variety of quantum circuits and use those to better benchmark different modular architectures. In addition, we hope to gain a better understanding of the treatment of probabilistic links for general graphs. For instance, as we discussed briefly when assessing the star graph $S_N$, one real concern in a networked setting is whether some parts of the network will form bottlenecks. To analyze the impact of this in a general way will require a better understanding of realistic quantum algorithms and the demands they place on a network. Analyzing more complex quantum algorithms could also shed light on the performance of partition-and-rotate placement algorithms in realistic settings when sequencing and scheduling also enter into consideration.

Finally, in addition to asking ourselves how current circuits and algorithms can be executed on highly modular systems, we also hope to explore the possibility that highly modular architectures open up new possibilities for parallelized quantum algorithms. For instance, Ref.~\cite{Hoyer2005} shows that quantum fan-out gates can be used to parallelize gate sequences, decreasing the time to perform an algorithm at the cost of requiring additional memory qubits. Hierarchies could implement such schemes by using high-level connections to perform the initial fan-out gates and then performing the various parallelized operations in each individual module.

\acknowledgments
We thank A.\ Childs, L.\ Jiang, C.\ Monroe, A.\ Houck, K.\ Hyatt, I.\ Kim, A.\ Koll\'ar, M.\ Lichtman, Y.-K.\ Liu, D.\ Maslov, E.\ Schoute, A.\ Wallraff, and D.\ Wentzlaff for discussions. This work was supported by ARO MURI, ARL CDQI, NSF Ideas Lab, ARO, NSF PFC at JQI, and the Department of Energy ASCR Quantum Testbed Pathfinder program. A.~B.\ is supported by the QuICS Lanczos Fellowship. Z.~E.\ is supported in part by the ARCS Foundation.  J.~R.~G. is supported by the the NIST NRC Research Postdoctoral Associateship Award. F.~C. is supported by NSF Expeditions in Computing grant 1730449, Los Alamos National Laboratory and the U.S. Department of Defense under subcontract 431682, by NSF PHY grant 1660686, and by a research gift from Intel Corporation. This work was performed in part at the Aspen Center for Physics, which is supported by National Science Foundation grant PHY-1607611.

\bibliography{References}
\end{document}